\newtheorem{observation}{Observation}
\DeclarePairedDelimiter{\ceil}{\lceil}{\rceil}
\begin{document}
\title{Improving Locality Sensitive Hashing by Efficiently Finding Projected Nearest Neighbors}

\author{Omid Jafari\orcidID{0000-0003-3422-2755} \and
Parth Nagarkar\orcidID{0000-0001-6284-9251} \and
Jonathan Monta\~no\orcidID{0000-0002-5266-1615}
}
\authorrunning{O. Jafari et al.}

\institute{New Mexico State University, Las Cruces, US \and
\email{\{ojafari, nagarkar, jmon\}@nmsu.edu}}
\maketitle              
\begin{abstract}
Similarity search in high-dimensional spaces is an important task for many multimedia applications. Due to the notorious \textit{curse of dimensionality}, approximate nearest neighbor techniques are preferred over exact searching techniques since they can return \textit{good enough} results at a much better speed. \textit{Locality Sensitive Hashing} (LSH) is a very popular random hashing technique for finding approximate nearest neighbors. Existing state-of-the-art Locality Sensitive Hashing techniques that focus on improving performance of the overall process, mainly focus on minimizing the total number of IOs while sacrificing the overall processing time. The main time-consuming process in LSH techniques is the process of finding neighboring points in projected spaces. 
We present a novel index structure called \underline{r}adius-\underline{o}ptimized \underline{L}ocality \underline{S}ensitive \underline{H}ashing (\textit{roLSH}). With the help of sampling techniques and Neural Networks, we present two techniques to find neighboring points in projected spaces efficiently, without sacrificing the accuracy of the results. Our extensive experimental analysis on real datasets shows the performance benefit of \textit{roLSH} over existing state-of-the-art LSH techniques. 

\keywords{Approximate Nearest Neighbor Search \and High-Dimensional Spaces \and Locality Sensitive Hashing \and Neural Networks}
\end{abstract}

\section{Introduction}
Finding nearest neighbors is an important problem in many domains such as information retrieval, computer vision, machine learning, multimedia retrieval, etc. For low-dimensions ($<10$), popular tree-based index structures, such as KD-tree
, Quad-tree
, etc. are effective, but for higher number of dimensions, these index structures suffer from the well-known problem, \textit{curse of dimensionality} (where the performance of these index structures is often out-performed even by linear scans) \cite{Chavez:2001:SMS:502807.502808}. One solution to this problem is to search for approximate results instead of exact results. In many applications where strictly correct results are not necessary, approximate results can produce \textit{good enough} results while achieving much better running times. The goal of the $c$-approximate version of the Nearest Neighbor problem (ANN) is to find nearest neighbors for a given query point that are within $c*R$ distance (where $c>1$). 

\subsection{Locality Sensitive Hashing}
\label{sec:lsh}
Locality Sensitive Hashing (LSH) \cite{Gionis:1999:SSH:645925.671516} is a very popular technique for solving the Approximate Nearest Neighbor problem in high-dimensional spaces. LSH uses \textit{random} projections to map high-dimensional points to lower dimensional representations. The intuition behind LSH is that nearby points in high-dimensional spaces will map to same (or nearby) hash buckets in the projected lower dimensional space with a high probability (and vice-versa). Since the original LSH index structure was proposed for Hamming distance, LSH families have been proposed for other popular distances such as the Euclidean distance \cite{Datar:2004:LHS:997817.997857}. 
The main benefits of LSH are three-fold: 1) LSH provides theoretical guarantees on the accuracy of the results, 2) LSH can answer ANN queries in sub-linear time with respect to the dataset size, and 3) LSH can be easily implemented as external memory-based index structures, thus making them more scalable \cite{Liu:2019}. While the original LSH design suffered from large index sizes \cite{Lv:2007:MLE:1325851.1325958}, recent works \cite{Gan:2012:LHS:2213836.2213898,Huang:2015:QLH:2850469.2850470,Tobias:2019} have either improved theoretical bounds or introduced techniques such as \textit{Collision Counting} (Section \ref{sec:prelim}) to reduce the number of required hash functions. Due to the popularity of LSH in diverse applications \cite{Earthquake:10.14778/3236187.3236214,yang2004hierarchical}, several research works have been proposed to improve the search efficiency and/or accuracy of LSH techniques \cite{Lv:2007:MLE:1325851.1325958,Gan:2012:LHS:2213836.2213898,Huang:2015:QLH:2850469.2850470,Tobias:2019,Jafari:2019:QCI:3323873.3325048,Liu:2014:SEI:2732939.2732947,Liu:2019,PM-LSH/3377369.3377374}.

\subsection{Motivation of our work: Improving the Efficiency of Existing State-of-the-Art LSH Techniques}
\label{sec:ourMotiv}
One of the important benefits of LSH is their ease of implementation as external storage based algorithms. State-of-the-art external memory-based algorithms (namely C2LSH \cite{Gan:2012:LHS:2213836.2213898}, QALSH \cite{Huang:2015:QLH:2850469.2850470}, and I-LSH \cite{Liu:2019}) use a bucket-expansion strategy to find points from neighboring buckets. C2LSH and QALSH use a bucket exponential expansion strategy, whereas I-LSH uses an incremental expansion strategy. While I-LSH is the state-of-the-art algorithm that minimizes disk I/Os, it achieves this optimization at the expense of a costly overall processing time as shown in Section \ref{sec:exp}.\footnote{There is no existing work that compares the overall performance (in terms of query processing time, disk I/Os, and accuracy) of C2LSH, QALSH, and I-LSH. We present a detailed performance analysis between these works as a technical report (\url{https://3m.nmsu.edu/lsh-survey/}), which is also under submission at SISAP 2020.} Additionally, random I/Os (disk seeks) are known to be bottleneck in query processing \cite{10.1145/3209900.3209903} and much more expensive than sequential I/Os \cite{10.1007/s00778-017-0480-7}. I-LSH reduces overall I/Os by mainly reducing sequential I/Os. In this paper, our goal is to design an LSH external memory technique, \textit{roLSH}, that can reduce overall IOs, \textit{mainly random I/Os}, (by finding neighboring points efficiently) which improves the overall query processing time. 

\subsection{Contributions of this Paper}

In this paper, we propose a novel approach, called \underline{r}adius-\underline{o}ptimized \underline{L}ocality \underline{S}ensitive \underline{H}ashing (\textit{roLSH}) for efficiently finding top-k approximate nearest neighbors in high-dimensional spaces. Our main contributions are as follows:

\begin{itemize}
	\item We present a sampling-based technique, \textit{roLSH-samp}, that reduces the overall random disk I/Os which improves the query processing time while satisfying the theoretical guarantees of LSH. We provide the theoretical analysis for the correctness of \textit{roLSH-samp}.
	\item We further improve the efficiency by proposing a Neural Network-based technique, \textit{roLSH-NN}, for an improved prediction of projected radiuses (and thus further reduction in random disk I/Os), and hence further improving the performance without affecting the query accuracy. To the best of our knowledge, we are the first work to improve LSH parameters by using Neural Networks.
	\item Lastly, we experimentally evaluate both techniques of \textit{roLSH} on real high-dimensional datasets and show that \textit{roLSH} can outperform the state-of-the-art solutions in terms of performance while providing similar query accuracy.
\end{itemize}

\section{Related Work}
\label{sec:relWork}

Locality Sensitive Hashing is a popular technique for solving the Approximate Nearest Neighbor (ANN) problem in high-dimensional spaces. It was first introduced in \cite{Gionis:1999:SSH:645925.671516} for the Hamming distance and later extended to the Euclidean distance (E2LSH) \cite{Datar:2004:LHS:997817.997857}. These structures suffered from large index sizes due to the need to have large number of hash functions in multiple hash tables \cite{Gan:2012:LHS:2213836.2213898}. Additionally, a \textit{magic radius} need to be inputted to find the neighboring projected points, and in order to find the desired number of results, this \textit{magic radius} was arbitrarily chosen to be very high. Multi-Probe LSH \cite{Lv:2007:MLE:1325851.1325958} presented a technique to probe neighboring buckets if enough number of results were not found. C2LSH \cite{Gan:2012:LHS:2213836.2213898} introduced a \textit{Collision Counting} approach that reduced the need to have multiple hash tables, and hence reduced the overall index size. SK-LSH \cite{Liu:2014:SEI:2732939.2732947} introduced a linear ordering on the disk pages with the help of Z-order curve in order to reduce the overall I/Os. The drawback of SK-LSH was that it was created on the original LSH design, and hence also suffered from the \textit{magic radius} problem. QALSH \cite{Huang:2015:QLH:2850469.2850470} introduced query-aware hash functions and further reduced the number of hash functions necessary to achieve theoretical guarantees. The work closest to our proposed idea is I-LSH \cite{Liu:2019}, which introduces an incremental strategy for finding nearest neighbors in the projected space, as explained in the next section (\ref{sec:existing}). Recently, PM-LSH \cite{PM-LSH/3377369.3377374} developed a novel tunable confidence interval while using a PM-tree to solve c-ANN queries.\footnote{PM-LSH code is not yet released.}

\subsection{Existing Techniques for finding Neighboring Projected Points}
\label{sec:existing}
C2LSH \cite{Gan:2012:LHS:2213836.2213898} also introduced the concept of \textit{Virtual Rehashing}. The goal of \textit{Virtual Rehashing} is to find neighboring points that collide in neighboring hash buckets. The naive solution to finding neighboring points is to use a large projected radius such that enough neighboring points are found to return top-k results. The projected radius is entirely dependent on the data distribution, and as we show in Figure \ref{fig:radHists}, these projected radiuses can vary significantly. Hence, using an arbitrarily large radius results in wasted I/Os and unnecessary processing. Instead, \textit{Virtual Rehashing} starts with a very small radius (R=1), and then exponentially increases the radius in the following sequence: $R = 1, c, c^2, c^3...$. If at \textit{level-R}, enough candidates are not found, the radius is increased until enough query results are found. C2LSH \cite{Gan:2012:LHS:2213836.2213898} and QALSH \cite{Huang:2015:QLH:2850469.2850470} follow this exponential expansion strategy. I-LSH \cite{Liu:2019} introduces an incremental strategy where, instead of expanding the search radius exponentially, they find the nearest point to the query in each projection. C2LSH and QALSH stores points in hash buckets which are stored in disk pages. I-LSH stores each data point separately and hence instead of reading disk pages that stores a group of data points, only reads a point (which effectively is the same as reading a disk page of 4 bytes). While they save on disk I/O operations by this method, this is a very costly operation (as shown in Section \ref{sec:exp}) since this process has to be done thousands of times for larger radiuses.

\section{Background and Key Concepts}
\label{sec:prelim}

In this section, we describe the key concepts behind LSH. We mainly use the notations and formulations described in the seminal paper on Euclidean LSH families \cite{Datar:2004:LHS:997817.997857} and C2LSH \cite{Gan:2012:LHS:2213836.2213898}. 

\noindent\textbf{Hash Functions: }A hash function family $H$ is ($R$, $cR$, $p_1$, $p_2$)-sensitive if it satisfies the following conditions for any two points $x$ and $y$
in a $d$-dimensional dataset $D \subset \mathbb{R}^d$: if $|x - y| \leq R$, then $Pr[h(x) = h(y)] \geq p_1$, and if $|x - y| > cR$, then $Pr[h(x) = h(y)] \leq p_2$.

Here, $p_1$ and $p_2$ are probabilities and $c$ is an approximation ratio. In order for LSH to work, $c > 1$ and $p_1 > p_2$. 
The above definition states that the two points $x$ and $y$ are hashed to the same bucket with a very high probability $\geq p_1$ if they are close to each other (i.e. the distance between the two points is less than or equal to $R$), and if they are
not close to each other (i.e. the distance between the two points is greater than $cR$), then they will be hashed to the same bucket with a low probability $\leq p_2$. 
In the original LSH scheme for Euclidean distance, each hash function is defined as $h_{\vec{a},b} (x) = \left\lfloor{\frac{\vec{a}.x + b}{w}}\right\rfloor,$ where $\vec{a}$ is a $d$-dimensional random vector with entries chosen
independently from the standard normal distribution $N(0,1)$ and $b$ is a real number chosen uniformly from $[0, w)$, such that $w$ is the width of the hash bucket \cite{Datar:2004:LHS:997817.997857}.  
This leads to the following collision probability function~\cite{Datar:2004:LHS:997817.997857}, which states that if $||x,y||=r$, then the probability that $x$ and $y$ map to the same hash bucket for a given hash function $h_{\vec{a},b} (x)$ is:
$P(r) = \int_{0}^{w}{\frac{1}{r} \frac{2}{\sqrt{2\pi}} e^{\frac{-t^2}{2r^2}}(1-\frac{t}{w}) dt}$. 
Here, the collision probability $P(r)$ is decreasing on $r$ for a given $w$. 

For a $t$, which is the largest absolute value of a coordinate of point in $D$, and for every $b$ uniformly drawn from the interval $[0,c^{\lceil\log_c td\rceil}w^2]$ and $R=c^n$ for some $n\leq \lceil\log_c td\rceil$ we have that 
$h^R(x)=\left\lfloor\frac{h_{\vec{a},b} (x)}{R}\right\rfloor$
is $(R,cR,p_1,p_2)$-sensitive, where $p_1=p(1)$ and $p_2=p(c)$ \cite{Gan:2012:LHS:2213836.2213898}.

\noindent\textbf{Collision Counting:} In \cite{Gan:2012:LHS:2213836.2213898}, authors theoretically show that two close points $x$ and $y$ collide in at least $l$ hash layers with a probability $1-\delta$, when the total number,
$m$, of hash layers are equal to:
$m = \ceil[\big]{\frac{\ln(\frac{1}{\delta})}{2(p_1-p_2)^2}(1+z)^2}$.
Here, $z=\sqrt{\ln({\frac{2}{\beta}})/\ln({\frac{1}{\delta}}})$, where $\beta$
is the allowed false positive percentage (i.e. the allowed number of points whose distance with a query point is greater than $cR$). C2LSH sets $\beta=\frac{100}{n}$, where $n$ is the cardinality of the dataset. Further, only those points that collide at least $l$ times, 
where $l$ is the collision count threshold, which is calculated as following: $l = \lceil\alpha \times m\rceil$, where the collision threshold percentage, $\alpha$, is 
$\alpha = \frac{zp_1 + p_2}{1+z}$.
Since C2LSH creates only one hash function per hash layer, the number of hash functions are equal to the number of hash layers.  

\section{Problem Specification}
 
The approximate version of the nearest neighbor problem, also called \textit{c-$\delta$-approx-imate Nearest Neighbor search}, aims to return points that are within $c * R$ distance from the query point with probability at least $1-\delta$, where $c>1$ is a user-defined approximation ratio, $R$ is the distance of the query point from its nearest neighbor, and $\delta$ is a user-defined error probability. 

In this paper, our goal is to return c-$\delta$-ANNs for a given query $q$ while reducing the overall processing time and satisfying the theoretical guarantees. In Section \ref{sec:rolsh}, we present the processing cost breakdown of the LSH process based on which we design our proposed index structure, radius-optimized Locality Sensitive Hashing (\textit{roLSH}).

\begin{figure*}[h]
	\centering
	\begin{subfigure}[b]{0.31\textwidth}
		\centering
		{\includegraphics[width=\linewidth]{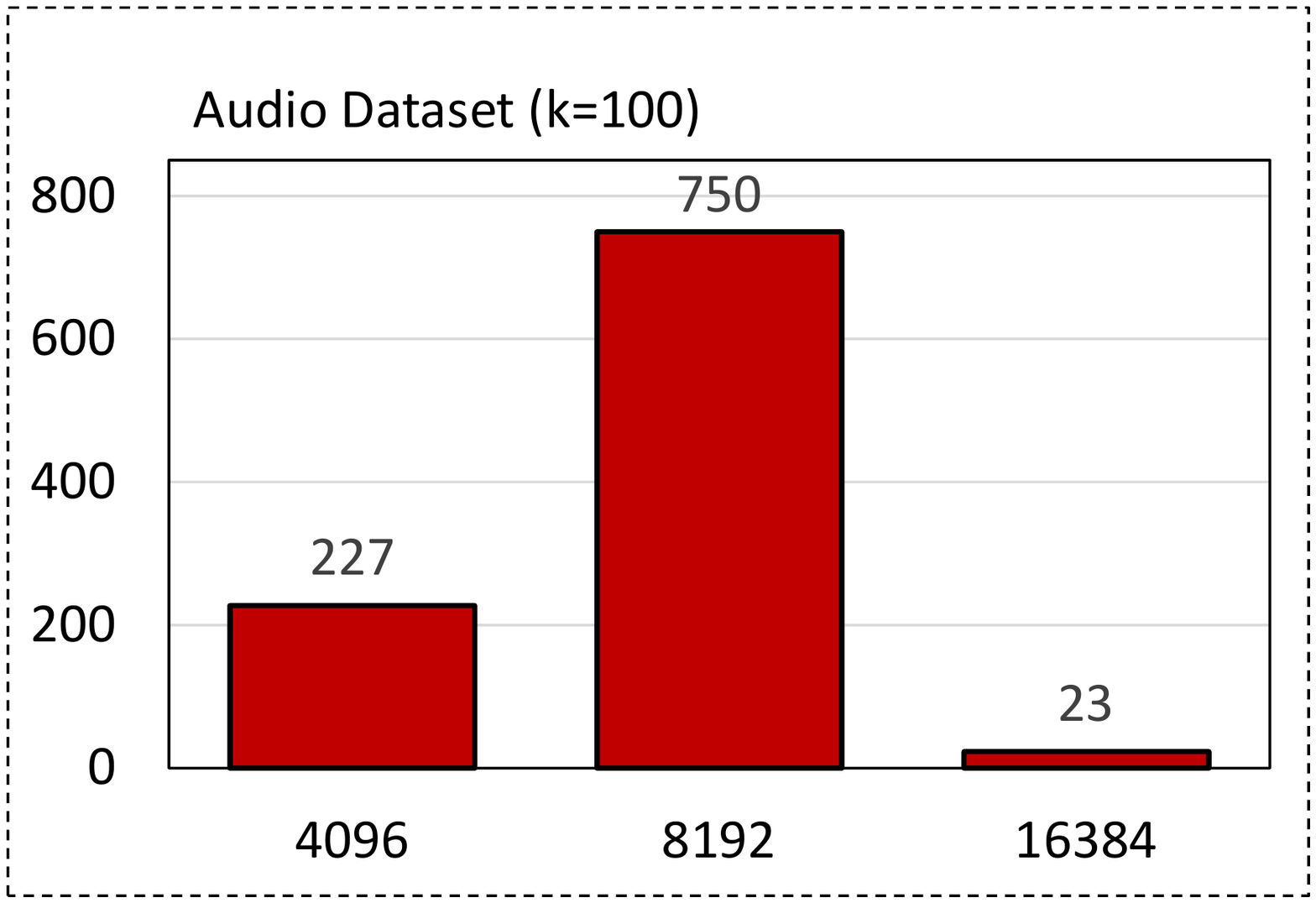}}
	\end{subfigure}\quad
	\begin{subfigure}[b]{0.31\textwidth}
		\centering
		{{\includegraphics[width=\linewidth]{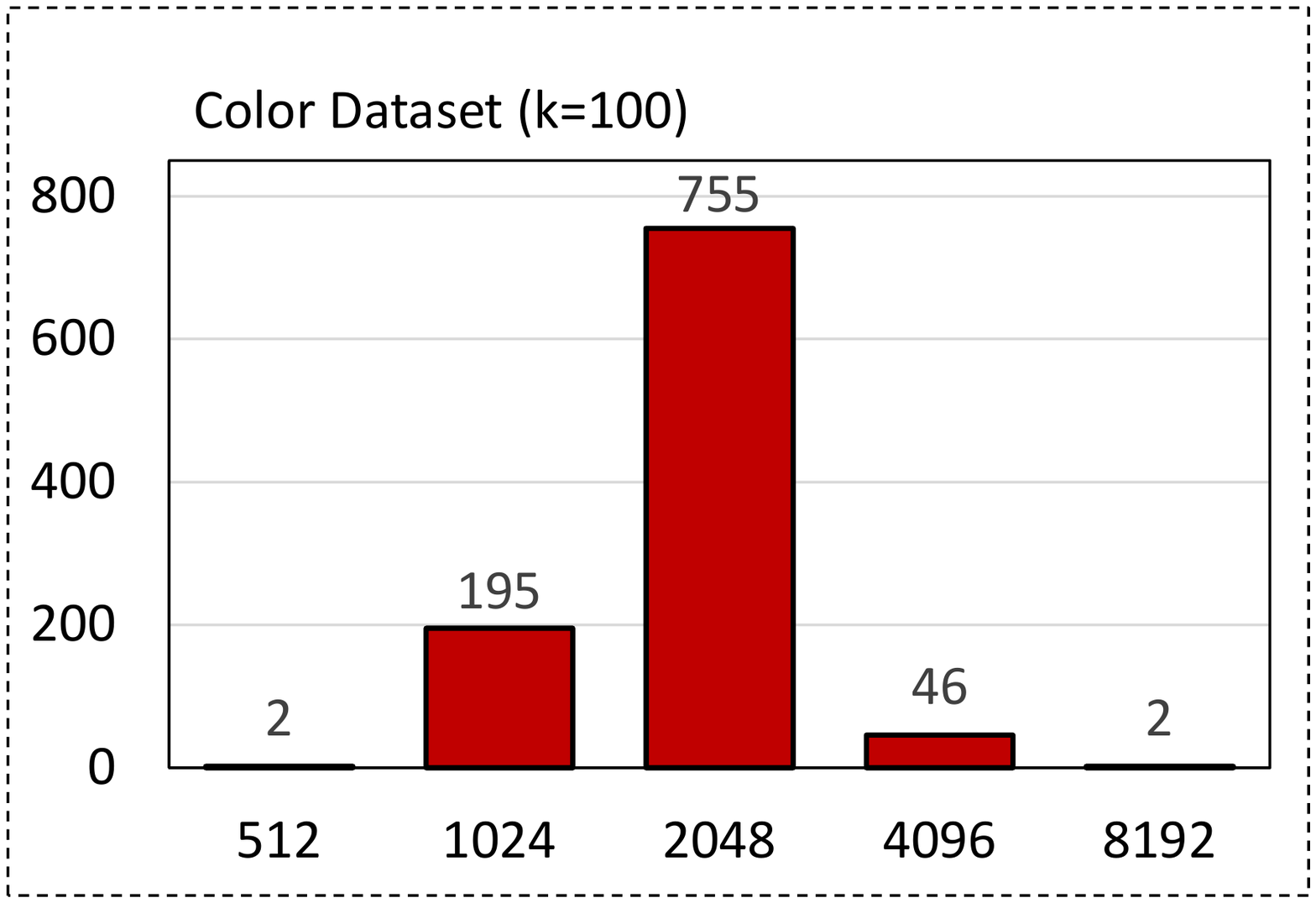}}}
	\end{subfigure}\quad
	\begin{subfigure}[b]{0.31\textwidth}
		\centering
		{{\includegraphics[width=\linewidth]{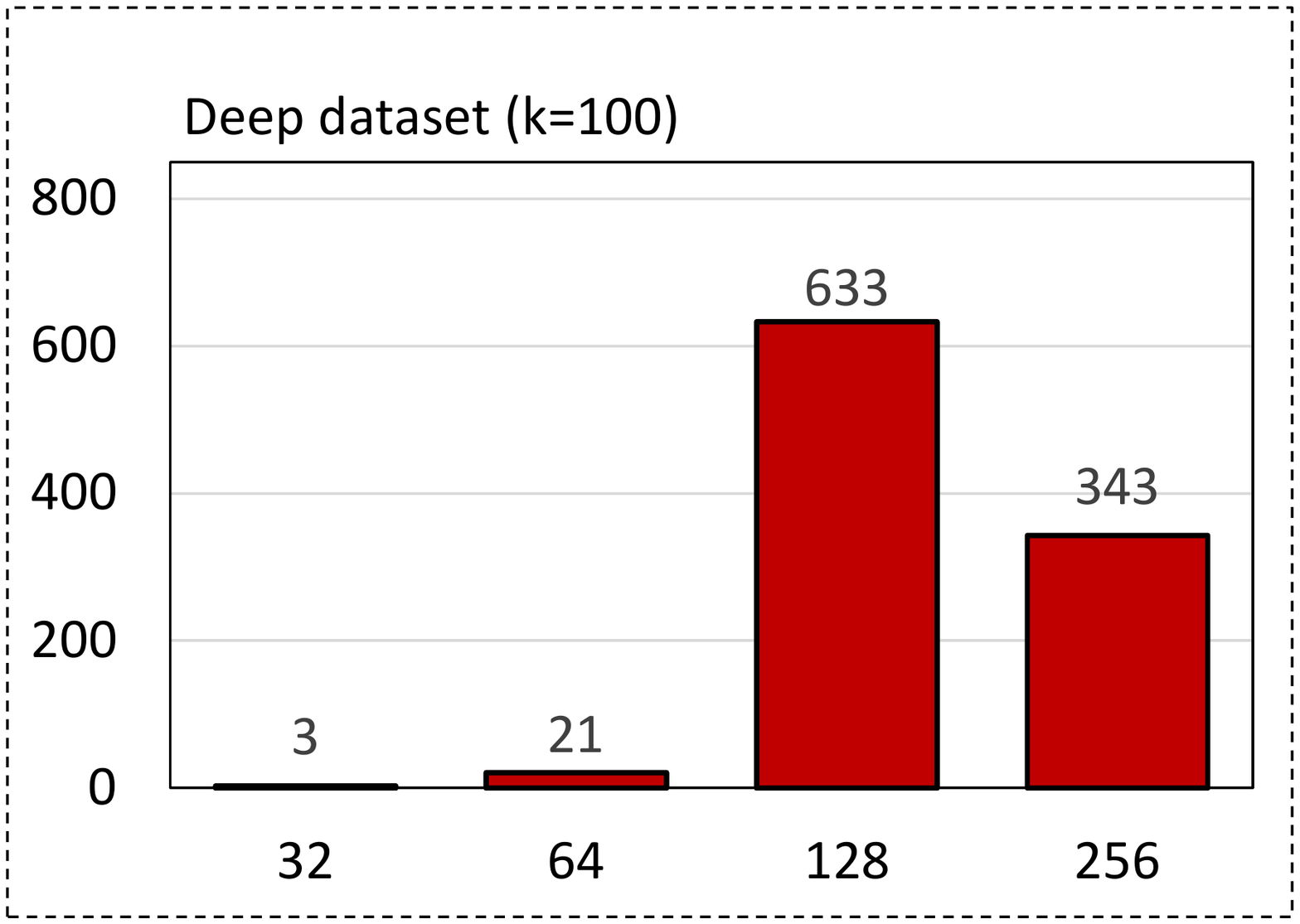}}}
	\end{subfigure}
	\caption{Frequency (Y-axis) of Final Radius Values (X-axis) for finding Top-100 Points for 1000 Point Queries on Different Datasets using C2LSH}
	\label{fig:diffSampling}
	\vspace*{-0.4in}
\end{figure*}

\section{roLSH}
\label{sec:rolsh}

In this section, we present the design of \textit{roLSH}, which consists of two strategies for efficiently finding neighboring points in the hash functions. We introduce and describe these two strategies in this section: a sampling-based strategy, called \textit{roLSH-samp}, and a Neural Network-based strategy, called \textit{roLSH-NN}. 

\subsection{Sampling-based Improved Virtual Rehashing Strategy}
\label{sec:roLSH-samp}

In Section \ref{sec:relWork}, we explained the original Virtual Rehashing strategy (denoted as \textit{oVR} strategy) as proposed in C2LSH \cite{Gan:2012:LHS:2213836.2213898}. The initial radius is set to 1, and if sufficient results are not found, then the radius is increased in an exponential sequence: $R = 1, c, c^2, c^3...$ until sufficient number of results are found. The main drawback of this approach is when the values of $R$ become larger (i.e. when the difference between two consecutive radius values is large - e.g. 4096 and 8192). In such situations, it happens frequently that very few (or no) nearest neighbor points are found at radius value 4096 but all (and lot more) are found at radius 8192. Thus, for example, if the actual radius of the $k$th-nearest point was near 5000, then index files corresponding to radius 5000-8192 will be read unnecessarily from the disk, leading to expensive wasted IO operations. Instead, we propose a sampling-based improved Virtual Rehashing strategy (denoted as \textit{roLSH-samp}) based on the following observation: 
\begin{observation}
	For high-dimensional datasets, the required radius values for a $k$ value are similar to each other for different query points for a given dataset.
\end{observation}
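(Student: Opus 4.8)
The statement is an empirical regularity about a data-dependent quantity, so the plan is to support it by combining a concentration-of-measure argument with direct experimental validation rather than by a closed-form derivation. Write $R_k(q) = \|q - \mathrm{NN}_k(q)\|$ for the true distance from a query $q$ to its $k$-th nearest neighbor in $D$. The first step is to invoke the well-documented fact that in sufficiently high-dimensional spaces the ratio of the farthest to the nearest point distances from a query tends toward $1$, so that $R_k(q)$ has small relative spread as $q$ ranges over realistic queries; this is precisely the ``curse of dimensionality'' regime assumed in the observation, and it is the only genuinely dataset-dependent ingredient in the argument.

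Next I would push this concentration through the projections. For a single hash function $h_{\vec a, b}$ and two points at Euclidean distance $r$, the quantity $\vec a \cdot (x-y)$ is $N(0, r^2)$-distributed, so whether the $k$-th neighbor collides with $q$ at level-$R$ depends only on $r/R$ through the monotone collision-probability function $P$, which is decreasing in $r$ for a fixed bucket width $w$. Consequently the smallest level $R = c^{n}$ at which the $k$-th neighbor accumulates at least $l$ collisions among the $m$ hash layers is, conditioned on the random projections, an increasing function of $R_k(q)$; and since the Collision Counting bound fixes $m = \ceil[\big]{\frac{\ln(\frac{1}{\delta})}{2(p_1-p_2)^2}(1+z)^2}$ so that this event holds with probability $1-\delta$, the contribution of the projection randomness to the variability of the required radius is tightly controlled. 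Combining the two steps, the required radius is a slowly varying function of $R_k(q)$ (monotone, and nearly affine on the $\log_c$ scale used by Virtual Rehashing) plus a small controlled term, and hence it inherits the concentration of $R_k(q)$.

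The hard part is that this chain is only heuristic: the concentration of $R_k(q)$ cannot be established from first principles without distributional assumptions on $D$, and the map from $R_k(q)$ to the discrete radius level at which enough of the $m$ indicator events — correlated through the threshold $l$ — fire is not available in closed form. I would therefore close the argument empirically, as in Figure~\ref{fig:diffSampling}: plotting, for several real high-dimensional datasets, the histogram of the final radius values needed to retrieve the top-$100$ points over $1000$ queries shows that in each case the mass falls within a narrow band of radius levels. This both confirms the observation and justifies estimating the starting radius from a small sample of queries in \textit{roLSH-samp}.
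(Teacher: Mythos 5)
Your proposal is correct and ultimately rests on the same justification the paper gives: the observation is not proved formally but supported empirically by the histograms of final radii over sampled queries (Figure~\ref{fig:diffSampling}) together with the remark, credited to the PM-LSH work \cite{PM-LSH/3377369.3377374}, that distance distributions in high-dimensional datasets are highly homogeneous. Your added heuristic chain (concentration of $R_k(q)$ pushed through the monotone collision-probability function and the Collision Counting threshold) is a reasonable elaboration of that homogeneity remark, but the paper itself offers no such derivation, and like you it closes the argument with the empirical evidence.
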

This observation was also noted by a very recent paper \cite{PM-LSH/3377369.3377374} where the authors show that the homogeneity of the distance distributions of data points in different high-dimensional datasets is very high. Figure \ref{fig:diffSampling} shows our observation on popular real high-dimensionsal datasets with varying cardinalities and dimensionalities (Audio \cite{Audio}, Color \cite{Color}, 
Deep \cite{babenko2016efficient}). For 1000 randomly chosen query points, we report the final radius values (using the Virtual Rehashing technique from C2LSH \cite{Gan:2012:LHS:2213836.2213898}) for top-100 points. By leveraging the above stated simple observation, we design an improved, simple, and effective \textit{Virtual Rehashing} technique: we execute a sample set of randomly chosen queries for a given $k$ and count the number of occurrences of the final radius value. We choose our initial radius value that is before the radius with the maximum count of sampled queries. E.g. in the Audio dataset (Figure \ref{fig:diffSampling}), the radius with the maximum count is 8192. For these queries, it means that the optimal radius would be between 4096 and 8192. Hence we choose our initial radius value to be 4096.  
Thus instead of starting at the initial radius of 1, we find an \underline{i}mproved \underline{i}nitial starting \underline{r}adius (denoted as $i2R$) based on sampling queries at the end of the indexing process. Note that, since this is done during the indexing phase, it has no overhead during query execution. Additionally, we do not need to store any distance pairs, but simply need to execute a small number of top-$k$ point queries to find the initial starting radius. Once the initial starting radius ($i2R$) is found, we leverage the same exponential sequence strategy as C2LSH, such that

\[ R = 
	\begin{cases} 
		i2R + 2 ^ x & 0 \leq x \leq \log_2 i2R \\
		2 ^ x & x > \log_2 i2R 
	\end{cases}
\]

Thus, for 1000 random queries on the Audio dataset, using the original \textit{Virtual Rehashing} technique, the average final radius is 7450 for $c=2$ and $k=100$. On the other hand, using our improved strategy, the average final radius is 6083, which leads to significant savings in the IO.

Note that, one disadvantage of this approach is that there potentially can be queries that finish with a radius value much lower than the chosen initial radius. E.g., in the Color dataset (Figure \ref{fig:diffSampling}), our strategy will choose $i2R = 1024$. As you can see, there were 2 out of 1000 queries whose final radius value to find top-100 points was 512. In this case, the \textit{iVR} strategy will do wasted work by starting (and ending) at 1024. 

\begin{lemma}\label{samplingLemma}
For those queries whose required radius in \textit{oVR}  is at least $(2\times i2R)$,  \textit{iVR} strategy will generate less IOs than the \textit{oVR} strategy. 
\end{lemma}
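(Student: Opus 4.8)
The plan is to charge the cost of a run to two components---the number of hash buckets (equivalently, disk pages) scanned, and the number of radius-expansion levels, each of which costs one random seek---and to show that under the hypothesis \textit{iVR} pays no more than \textit{oVR} on the first component and strictly less on the second. Write $i2R = 2^m$ (it is always a power of two, being the term of $1,2,4,\dots$ just before the maximum-count radius), and for the query under consideration let $\rho$ be its \emph{true} required radius, i.e.\ the smallest radius at which the collision-counting test of Section~\ref{sec:prelim} is satisfied; $\rho$ is a property of the query and the data, not of the expansion schedule. Then \textit{oVR} halts at $R_{\mathrm{oVR}} = 2^{\lceil \log_2 \rho\rceil}$, and the hypothesis ``required radius in \textit{oVR} is at least $2\times i2R$'' reads $R_{\mathrm{oVR}}\ge 2^{m+1}$, equivalently $\rho>2^m$. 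The one structural fact I would isolate first is \emph{monotonicity of the read set}: as the search radius increases, the set of points that must be retrieved only grows, so the total page traffic of a run is a nondecreasing function of the final radius it reaches; this reduces the first-component comparison to comparing final radii.

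Next I would bound $R_{\mathrm{iVR}}$ with a two-case split. If $2^m<\rho\le 2^{m+1}$ then $R_{\mathrm{oVR}}=2^{m+1}$, and \textit{iVR} stops at the first $x$ with $2^m+2^x\ge\rho$, namely $x^\ast=\lceil\log_2(\rho-2^m)\rceil$; since $\rho-2^m\le 2^m$ we get $x^\ast\le m$, hence $R_{\mathrm{iVR}}=2^m+2^{x^\ast}\le 2^{m+1}=R_{\mathrm{oVR}}$ (strictly so unless $\rho$ is exactly $2^{m+1}$). If instead $\rho>2^{m+1}$, the ramp of \textit{iVR} runs through all of $x=0,\dots,m$, arriving at $2^m+2^m=2^{m+1}$, after which the schedule is the pure doubling sequence $2^{m+1},2^{m+2},\dots$, which coincides with the tail of \textit{oVR}'s sequence; so \textit{iVR} stops at the same power of two and $R_{\mathrm{iVR}}=R_{\mathrm{oVR}}$. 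In both cases $R_{\mathrm{iVR}}\le R_{\mathrm{oVR}}$, so by monotonicity \textit{iVR} scans no more pages than \textit{oVR}.

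The strict gap comes from counting expansion levels. \textit{oVR} probes the radii $2^0,2^1,\dots,2^{\log_2 R_{\mathrm{oVR}}}$, a total of $\log_2 R_{\mathrm{oVR}}+1$ probes. In the first case \textit{iVR} probes $x=0,\dots,x^\ast$ with $x^\ast\le m$, hence at most $m+1 = \log_2 R_{\mathrm{oVR}}$ probes. In the second case the ramp contributes the probes $x=0,\dots,m$ (radii $2^m+1$ through $2^{m+1}$), and beyond $2^{m+1}$ only $2^{m+2},\dots,R_{\mathrm{oVR}}$ expose new pages (the probe $x=m+1$ merely re-examines $2^{m+1}$), for $(m+1)+(\log_2 R_{\mathrm{oVR}}-m-1)=\log_2 R_{\mathrm{oVR}}$ probes. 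Either way \textit{iVR} issues strictly fewer probes, hence strictly fewer random seeks, while never scanning more pages; therefore its total I/O is strictly smaller, which is the claim. As a sanity check, the gap is widest exactly when $\rho<2^{m+1}$, where \textit{iVR} also removes a whole doubling step of overshoot from the first component---the wasteful situation described in the paragraph preceding the lemma.

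The combinatorics above are routine; the part I expect to require care is making the I/O model precise enough that the two facts I lean on are genuine lemmas rather than folklore. First, monotonicity of the read set is immediate for a query-aware window layout (the window at radius $R$ sits inside the window at radius $R'>R$) but needs a short argument for an $\lfloor h/R\rfloor$-bucket layout, and it must be checked even though \textit{iVR}'s ramp radii $2^m+2^x$ are not powers of two. Second, I need that each expansion level costs exactly one random seek plus a sequential scan of only the newly exposed pages, so that ``fewer levels and no more pages'' really does dominate in total I/O; if, say, a level touched pages in several non-contiguous ranges, this bookkeeping would change, and the strict conclusion also relies on the cost model actually registering random seeks (as the discussion in Section~\ref{sec:ourMotiv} does). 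A minor additional nuisance is the off-by-one at the junction $R=2^{m+1}$ between the ramp and the doubling tail, which has to be handled so the conclusion comes out strict (``less'') rather than merely ``no more''.
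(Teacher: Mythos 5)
Your proof is correct and takes essentially the same route as the paper's: the paper sets $R=i2R$, reduces to the case $R<r\leq 2R$ (since beyond $2\cdot i2R$ the two schedules coincide), and compares the $\log_2 R+2$ radii $1,2,\ldots,2R$ of \textit{oVR} against the at most $\log_2 R+1$ radii $R+1,R+2,\ldots,2R$ of \textit{iVR}. Your version is the same level-counting argument carried out a bit more carefully -- the explicit case split, the monotonicity of the data read in the final radius, and the observation that the duplicate level at $2\cdot i2R$ exposes no new pages -- all of which the paper's one-line reduction leaves implicit.
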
	
\begin{proof}
	Set $R=i2R$. By construction of the sequence of radii in  \textit{oVR}, it is enough to assume  that the required radius   is $2R$, that is, the actual radius $r$ of the $k$th-nearest point satisfies $R< r\leq 2R$. In the \textit{oVR}, the sequence of radii needed to find the  $k$th-nearest point has $\log_2 R + 2$ elements, that is, $1, 2, 4,\ldots, 2R$. On the other hand, for the same query $q$,  \textit{iVR}   analyzes at most $\log_2 R + 1$   radii, that is, $R+1, R+2, \ldots, 2R$. This finishes the proof.
\end{proof}

\noindent While I-LSH \cite{Liu:2019} still generates less disk I/Os than \textit{roLSH-samp}, \textit{roLSH-samp} is significantly faster than I-LSH (due to less overall processing time) and also generates less disk seeks than I-LSH for bigger datasets (Section \ref{sec:exp}).

\begin{figure*}[h]
	\centering

	\begin{subfigure}[b]{0.31\textwidth}
		\centering
		{{\includegraphics[width=\linewidth]{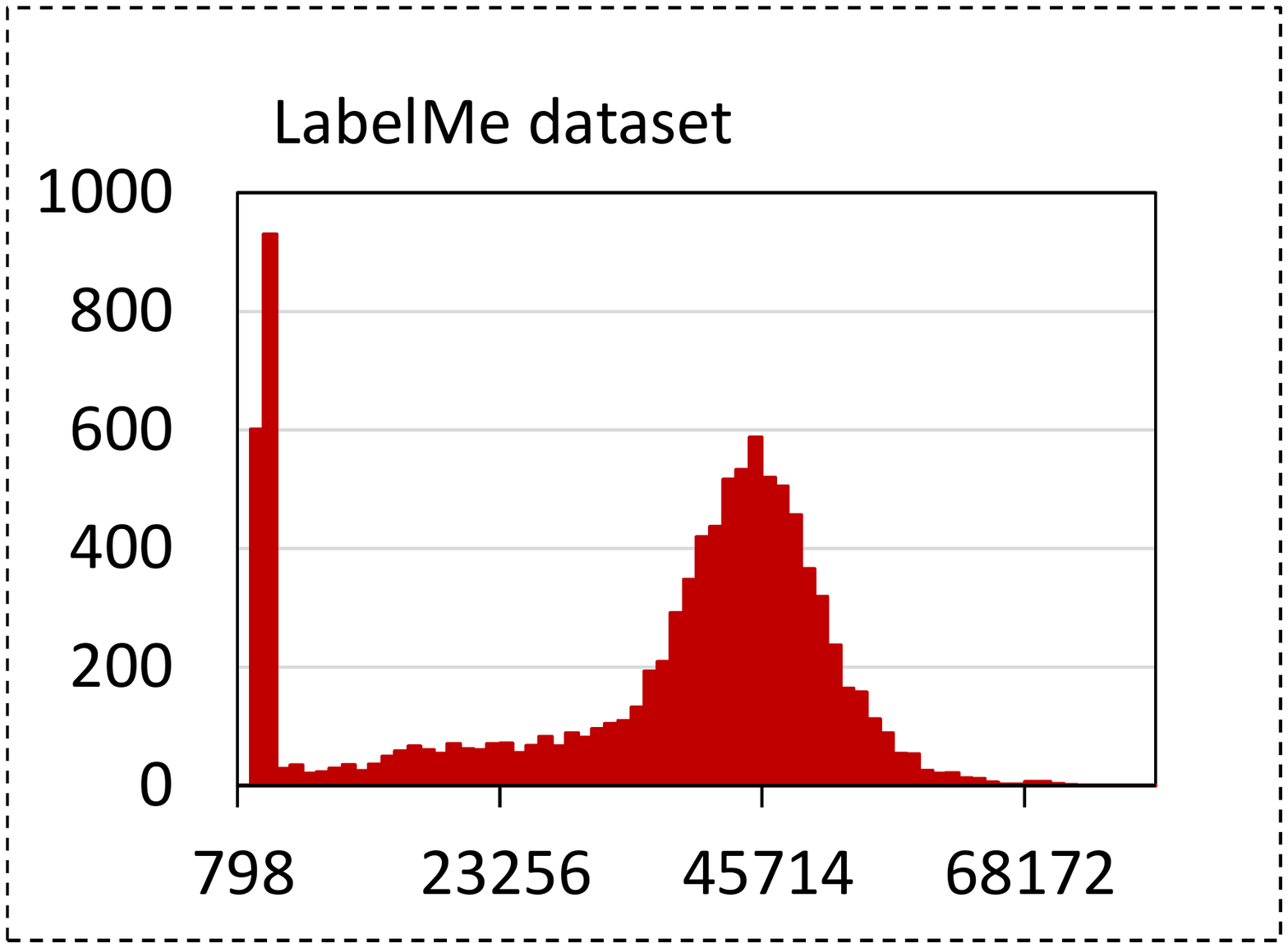}}}
	\end{subfigure}\quad
	\begin{subfigure}[b]{0.31\textwidth}
		\centering
		{{\includegraphics[width=\linewidth]{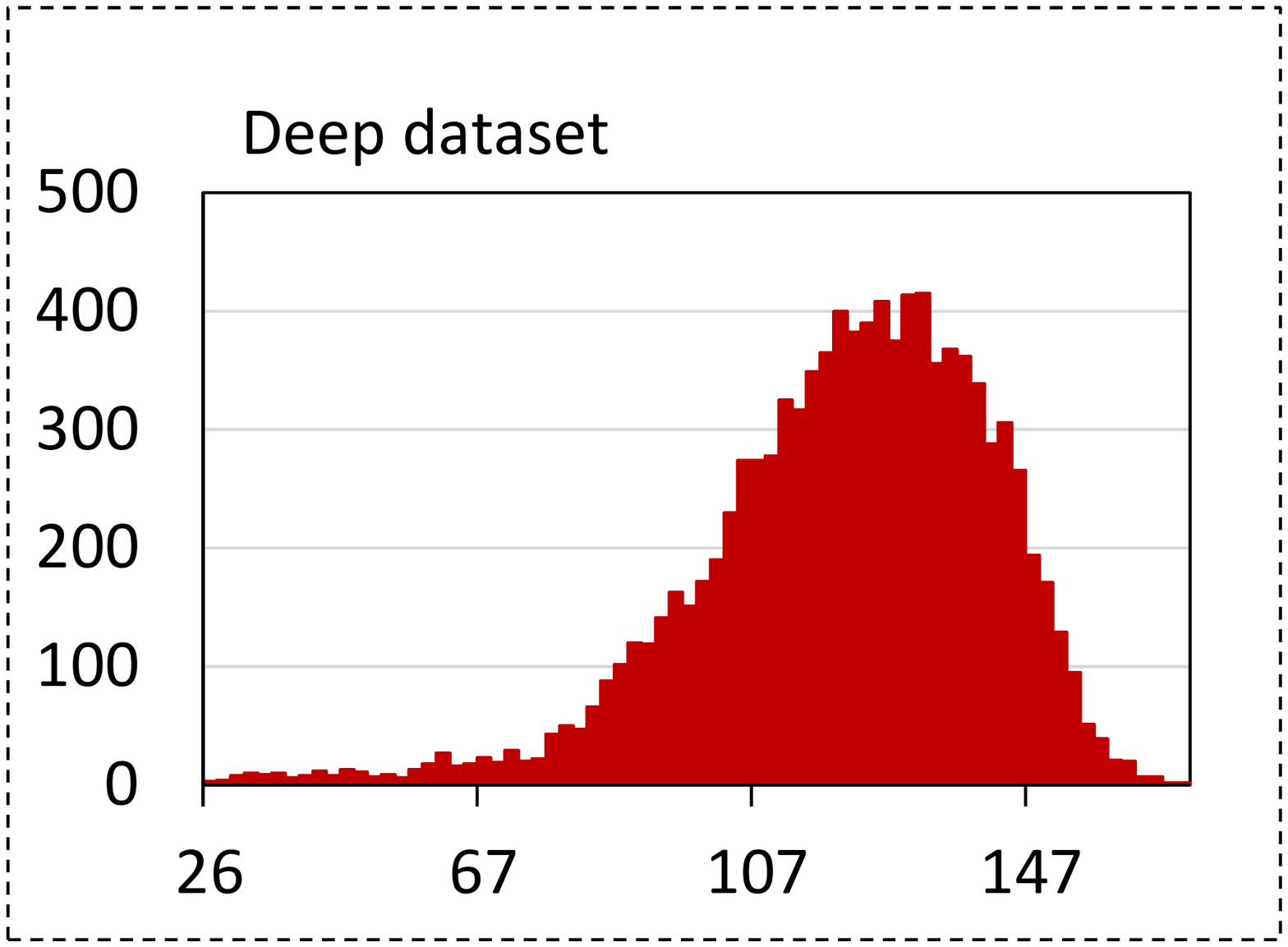}}}
	\end{subfigure}\quad
	\begin{subfigure}[b]{0.31\textwidth}
		\centering
		{{\includegraphics[width=\linewidth]{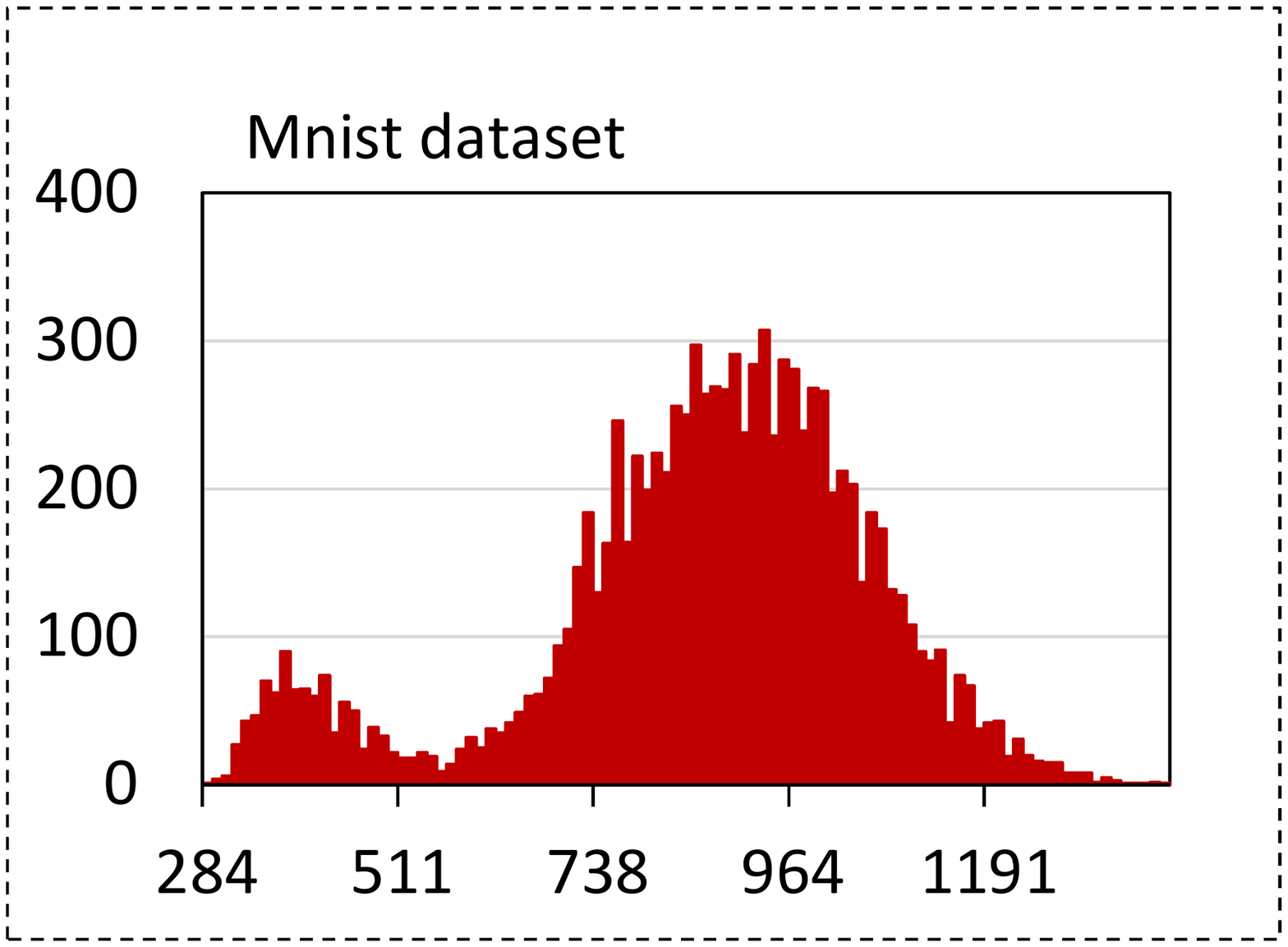}}}
	\end{subfigure}

	\caption{Frequency (Y Axis) of radiuses (X Axis) for 10,000 Top-100 Queries}
	\label{fig:radHists}
	\vspace*{-0.4in}
\end{figure*}

\subsection{Drawbacks of \textit{roLSH-samp}}
The main benefit of \textit{roLSH-samp} is that it is effective in reducing the disk I/Os, especially when the radiuses are large (e.g. the Audio dataset in Figure \ref{fig:diffSampling}). There is a minor overhead of utilizing the sampling-based method during the indexing phase. Additionally, we found that we also get good sampling representatives even with a small sampling size (e.g. 100). There are two main drawbacks of \textit{roLSH-samp}: 1) \textit{roLSH-samp} works best when Observation 1 holds true. We found out that Observation 1 holds true for many datasets, but not all. For example, as seen in Figure \ref{fig:radHists}, the radiuses for top-100 queries on the LabelMe dataset are quite different leading to inefficient performance of \textit{roLSH-samp} (as shown in Section \ref{sec:exp}), 2) It is not easy to do sampling for different $k$ values since the radius changes for different $k$ values. It is not trivial to build a single model and extend it to multiple $k$ values to find the radius for a particular $k$ value. Instead a model needs to be built for \textit{each} $k$ value. 

\subsection{Neural Network-based Prediction of Projected Radiuses}
\label{sec:rolsh-NN}

To remedy these two drawbacks, we present a Neural Network-based strategy, \textit{roLSH-NN}, that can better predict starting radiuses based on the query location (in each hash function) for any given $k$ value. The main intuition behind \textit{roLSH-NN} is that nearby points in the original space will have similar projected radiuses to find the desired number ($k$) of nearest neighbors. Hence, our goal is to predict the projected radiuses given the hash locations of a query for a given $k$.  

\noindent Formally, let $h_{i}(q)$ denote the bucket location of $q$ in the $i$th hash projection. Thus, $H(q) = h_{1}(q),...,h_{m}(q)$ denotes a vector of size $m$ (since there are $m$ hash projections) that contains $m$ bucket locations for a given query point $q$. Let $R_{act}(q, k)$ denote the smallest radius in the projected space that satisfies the desired number of results ($k$). Let $Q_{tr}$ be the set of training queries, where for each query $q\in Q_{tr}$, we also find out the ground truth (i.e. $R_{act}(q, k)$). This step is done in the indexing step, and hence does not affect the query processing time. We include and show this overhead in the indexing time in Section \ref{sec:exp}. We train a Neural Network with $Q_{tr}$ queries such that for each query $q$, we input $H(q)$ and $R_{act}(q, k)$ and the Neural Network outputs the predicted radius, $R_{pred}(q, k)$. We explain the different characteristics of our Neural Network in Section \ref{sec:exp}.  \\

\begin{table}[t] 
	\centerline{
		\begin{tabular}{|c|c|c|c|c|c|}\hline
			{\bf } & {\bf MLP} & {\bf Linear Reg.} & {\bf RANSAC} & {\bf Decision Tree} & {\bf Gradient Boosting}  \\ \hline\hline
			MSE & 0.0265 & 0.3543 & 0.3542 & 0.7057 & 0.2117 \\ \hline 
			R2 & 0.9687 & 0.5826 & 0.5827 & 0.1698 & 0.7504 \\ \hline 
		\end{tabular}
	}
	\caption{Performance Comparison of Learning Techniques}\label{tab:NN-Comparison}
	\vspace*{-0.3in}	
\end{table}

\noindent\textbf{Justification for choosing Neural Networks: }
Since the problem of predicting radiuses given the hash function is a regression problem, we tried several machine learning techniques. Table \ref{tab:NN-Comparison} shows that Neural Networks (denoted by \textit{MLP} since we use a Multilayer Perceptron Neural Network) have the best MSE and R2 for a sample dataset (Deep) for $Q_{tr}=10,000$ among different machine learning techniques (using 10-fold cross validation). Hence, we choose Neural Networks over other techniques in the design of \textit{roLSH-NN}. \\ 

\noindent\textbf{Underestimation of Radius: }When the radius is underestimated (i.e. $R_{pred}(q, k)$ $< R_{act}(q, k)$), the desired number of results are not found and hence we have to enlarge the radius in all projections. One strategy is to follow the same expansion pattern of \textit{roLSH-samp} presented in Section \ref{sec:roLSH-samp}, where the predicted radius is set as $i2R$. We call this strategy \textit{roLSH-NN-iVR}. The drawback of this strategy is that it can lead to excessive (and expensive) disk seeks if the predicted radius is much lower than the actual radius. Since we observe that $R_{pred}(q, k)$ is close to $R_{act}(q, k)$, we also adopt another strategy where we increase the predicted radius, $R_{pred}(q, k)$, linearly by $R_{inc}$ such that $R_{inc} = R_{pred}(q, k) \times \lambda$. This strategy is referred to as \textit{roLSH-NN-$\lambda$} in the rest of this paper. \\

\noindent\textbf{Overestimation of Radius: }While overestimation of the projected radius by the Neural Network leads to wasted disk I/Os during query processing, we experimentally show in Section \ref{sec:exp} that these wasted disk I/Os are still less than the exponential strategy of C2LSH/QALSH and the improvement in the query processing time (as compared with I-LSH) offsets the disk I/Os significantly. 

\noindent\textbf{Extension to any $k$: }In order to train the Neural Network to work for any number of desired results ($k$), we need to include $k$ as an input feature in the training set. In order to simplify the training procedure, we only consider few values of $k$ in the training set $Q_{tr}$. In Section \ref{sec:exp}, we explain the training setup and the effect of using different $k$ values during the training on the  
time.

\section{Experimental Evaluation}
\label{sec:exp}
In this section, we evaluate the effectiveness of our proposed index structure, \textit{roLSH}, on three real diverse high-dimensional datasets. All experiments were run on the nodes of the Bigdat cluster \footnote{Supported by NSF Award \#1337884} with the following specifications: two Intel Xeon E5-2695, 256GB RAM, and CentOS 6.5 operating system. We implement our work on top of C2LSH \cite{Gan:2012:LHS:2213836.2213898} since we found it to be the fastest external memory-based LSH algorithm (while achieving high accuracy for high-dimensional datasets). Note that, our method is orthogonal to the LSH algorithm and can be used in any state-of-the-art LSH algorithms. 
All codes were written in C++11 and compiled with gcc v4.7.2 with the -O3 optimization flag.
We compare our three strategies, \textit{roLSH-samp}, \textit{roLSH-NN-iVR}, and \textit{roLSH-NN-$\lambda$} with the state-of-the-art LSH algorithms C2LSH \cite{Gan:2012:LHS:2213836.2213898} and I-LSH \cite{Liu:2019}. 

\begin{table}
	\centerline{
		\begin{tabular}{|c|c|c|c|}\hline
			{\bf Index Size} & {\bf LabelMe} & {\bf Deep} & {\bf Mnist}\\ \hline\hline
			roLSH-samp & 164.1 & 744.1 & 2120.1 \\ \hline 
			roLSH-NN & 164.4 & 744.5 & 2120.5 \\ \hline 
			C2LSH & 164 & 744 & 2120 \\ \hline 
			I-LSH & 81 & 648 & 5265 \\ \hline 
		\end{tabular}
		\quad
		\begin{tabular}{|c|c|c|c|}\hline
			{\bf Index Time} & {\bf LabelMe} & {\bf Deep} & {\bf Mnist}\\ \hline\hline
			roLSH-samp & 83.5 & 93.5 & 1480.8 \\ \hline 
			roLSH-NN & 88.6 & 98.4 & 1488 \\ \hline 
			C2LSH & 80.6 & 69 & 1430.2 \\ \hline 
			I-LSH & 20.8 & 25.7 & 1359.9 \\ \hline 
		\end{tabular}	
	}

	\caption{Comparison of (a) Index Size (in MB) and (b) Index Construction Time (in sec) on Different Datasets}\label{tab:IndexingComp}

\end{table}

\begin{figure*}
	\centering
	\begin{subfigure}[b]{0.31\textwidth}
		\centering
		{\includegraphics[width=\linewidth]{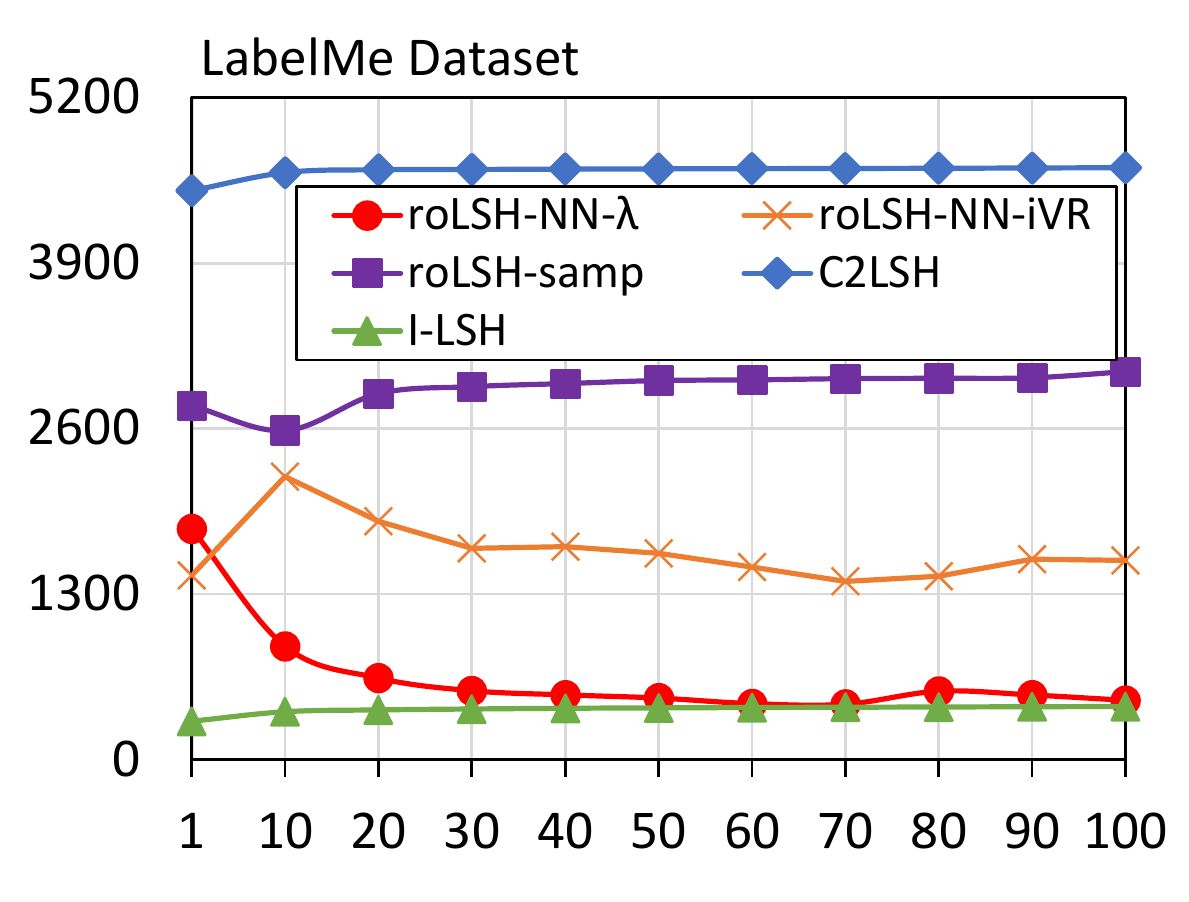}}
	\end{subfigure}\quad
	\begin{subfigure}[b]{0.31\textwidth}
		\centering
		{{\includegraphics[width=\linewidth]{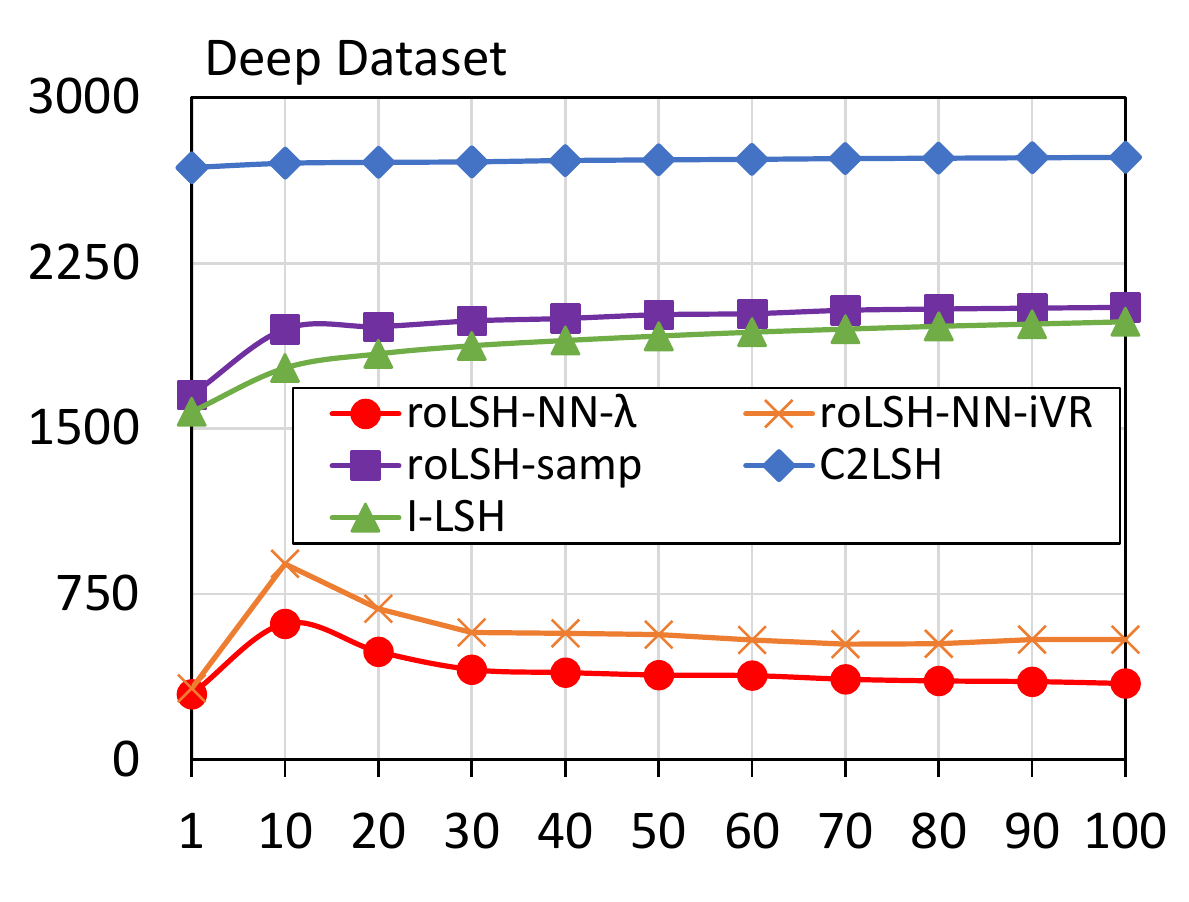}}}
	\end{subfigure}\quad
	\begin{subfigure}[b]{0.31\textwidth}
		\centering
		{{\includegraphics[width=\linewidth]{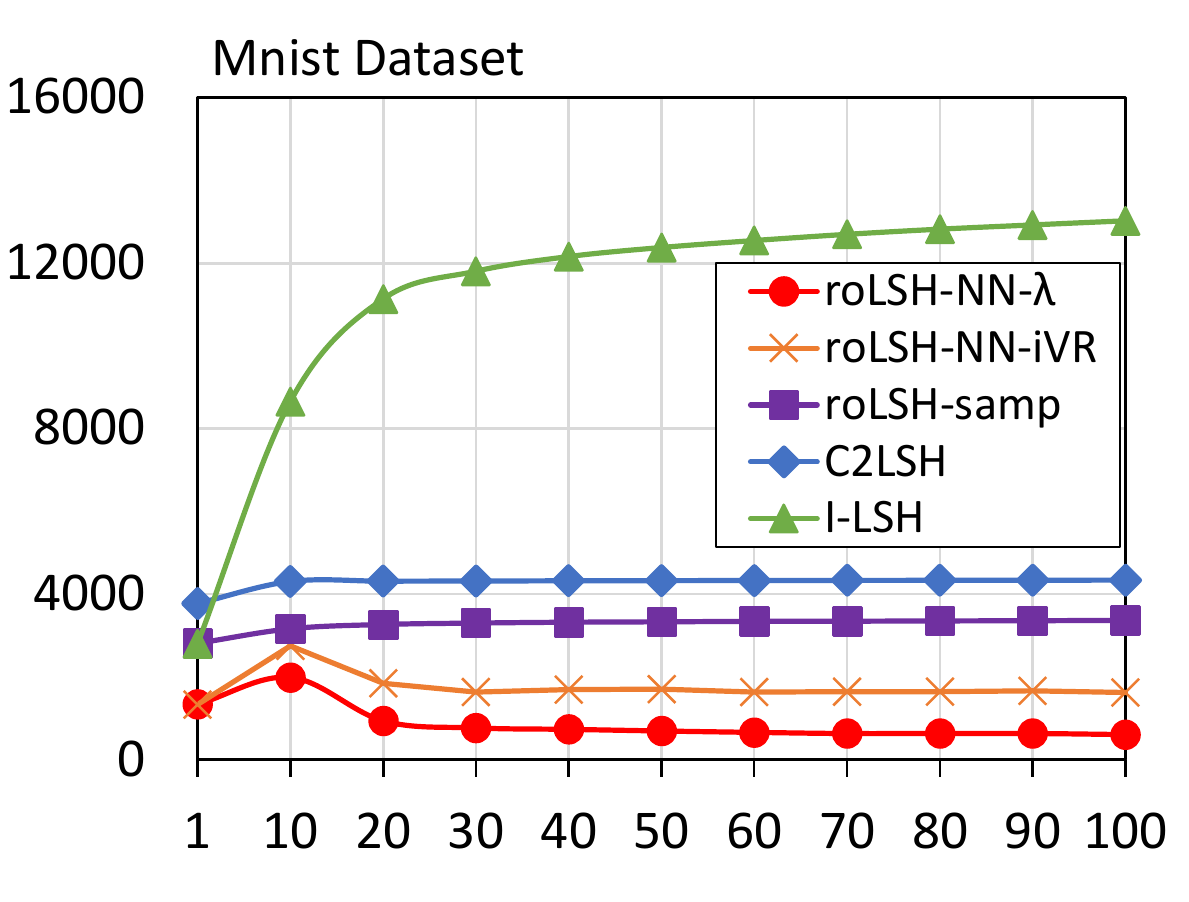}}}
	\end{subfigure}
	\caption{Number of Disk Seeks (Y axis) for different $k$ (X Axis) on 3 datasets}
	\label{fig:expdiskseek}

\end{figure*}

\begin{figure*}
	\centering
	\begin{subfigure}[b]{0.31\textwidth}
		\centering
		{\includegraphics[width=\linewidth]{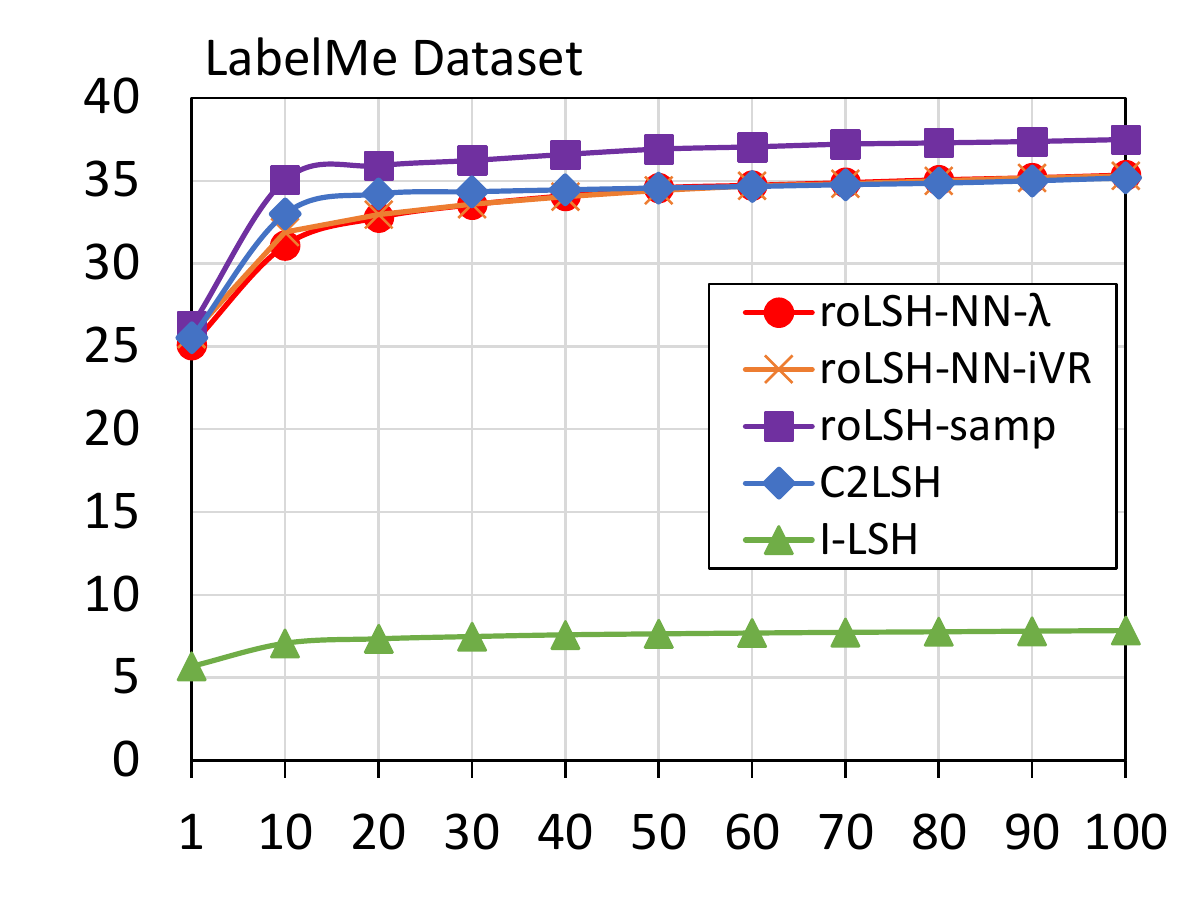}}
	\end{subfigure}\quad
	\begin{subfigure}[b]{0.31\textwidth}
		\centering
		{{\includegraphics[width=\linewidth]{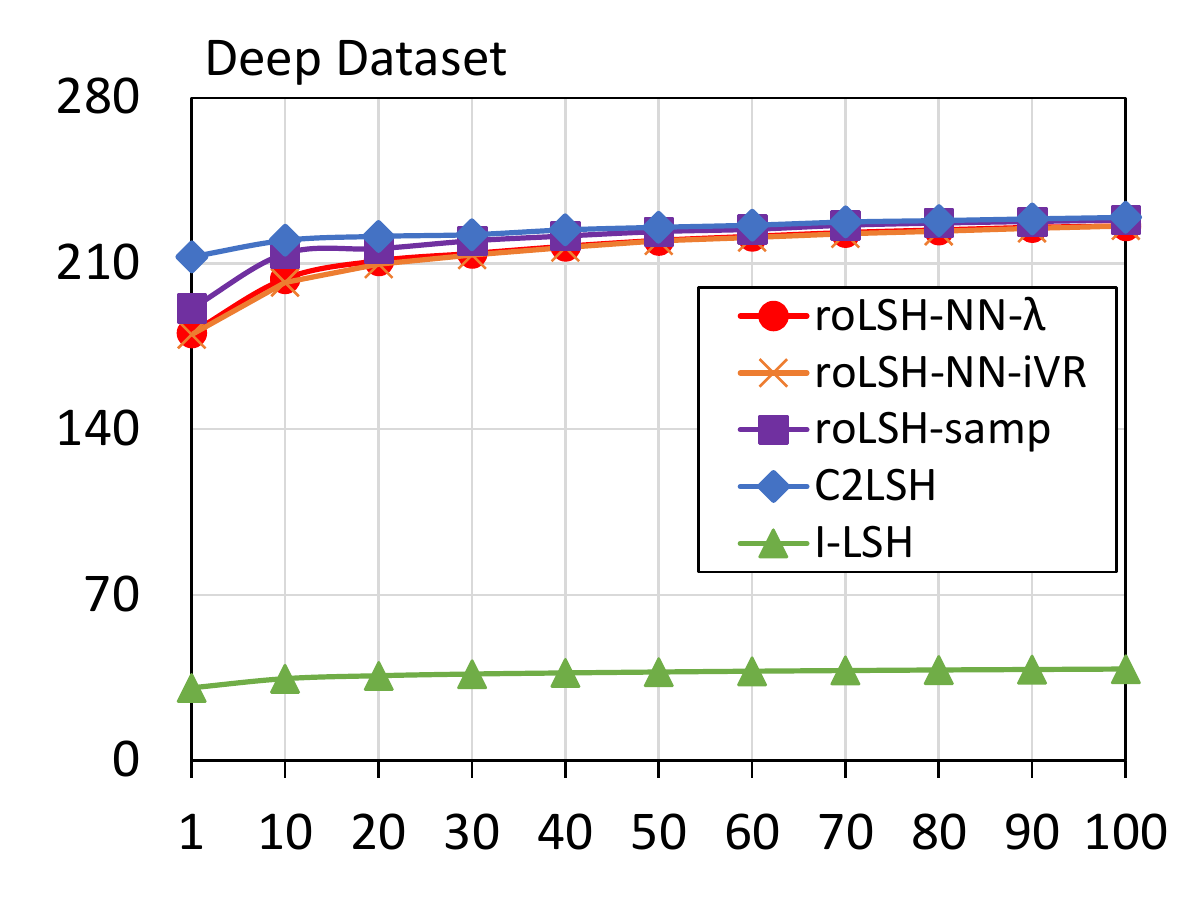}}}
	\end{subfigure}\quad
	\begin{subfigure}[b]{0.31\textwidth}
		\centering
		{{\includegraphics[width=\linewidth]{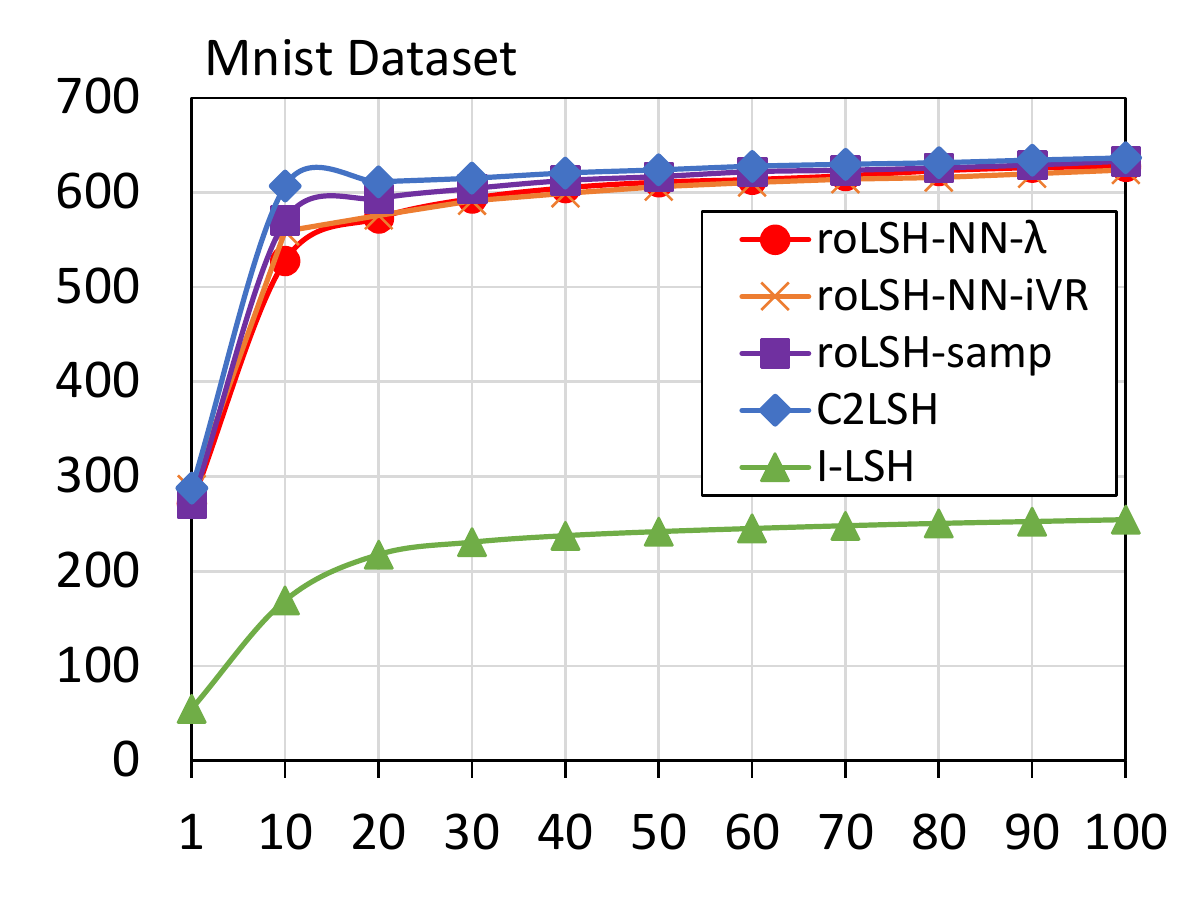}}}
	\end{subfigure}

	\caption{Amount of Data Read (in MB) (Y axis) for $k$ (X Axis) on 3 datasets}
	\label{fig:expdiskIO}

\end{figure*}

\begin{figure*}
	\centering
	\begin{subfigure}[b]{0.31\textwidth}
		\centering
		{\includegraphics[width=\linewidth]{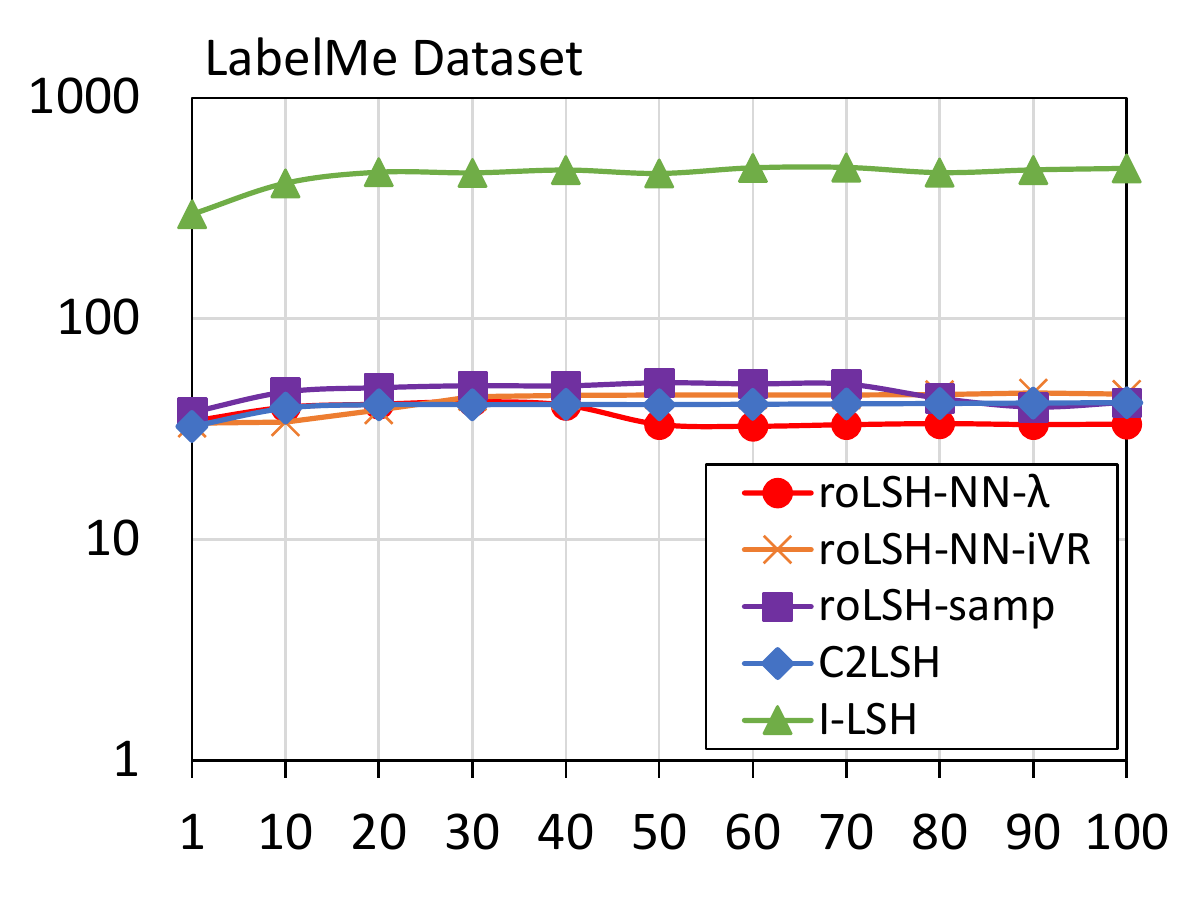}}
	\end{subfigure}\quad
	\begin{subfigure}[b]{0.31\textwidth}
		\centering
		{{\includegraphics[width=\linewidth]{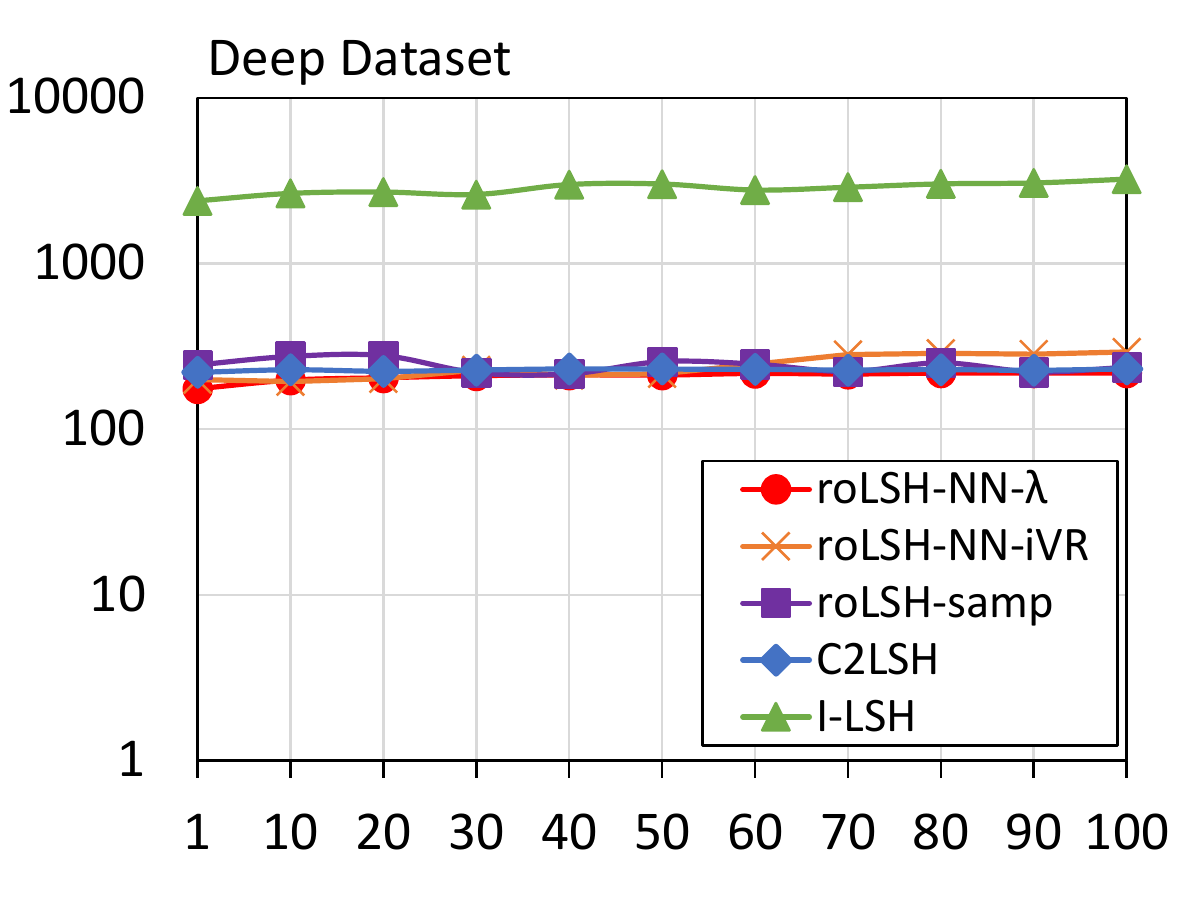}}}
	\end{subfigure}\quad
	\begin{subfigure}[b]{0.31\textwidth}
		\centering
		{{\includegraphics[width=\linewidth]{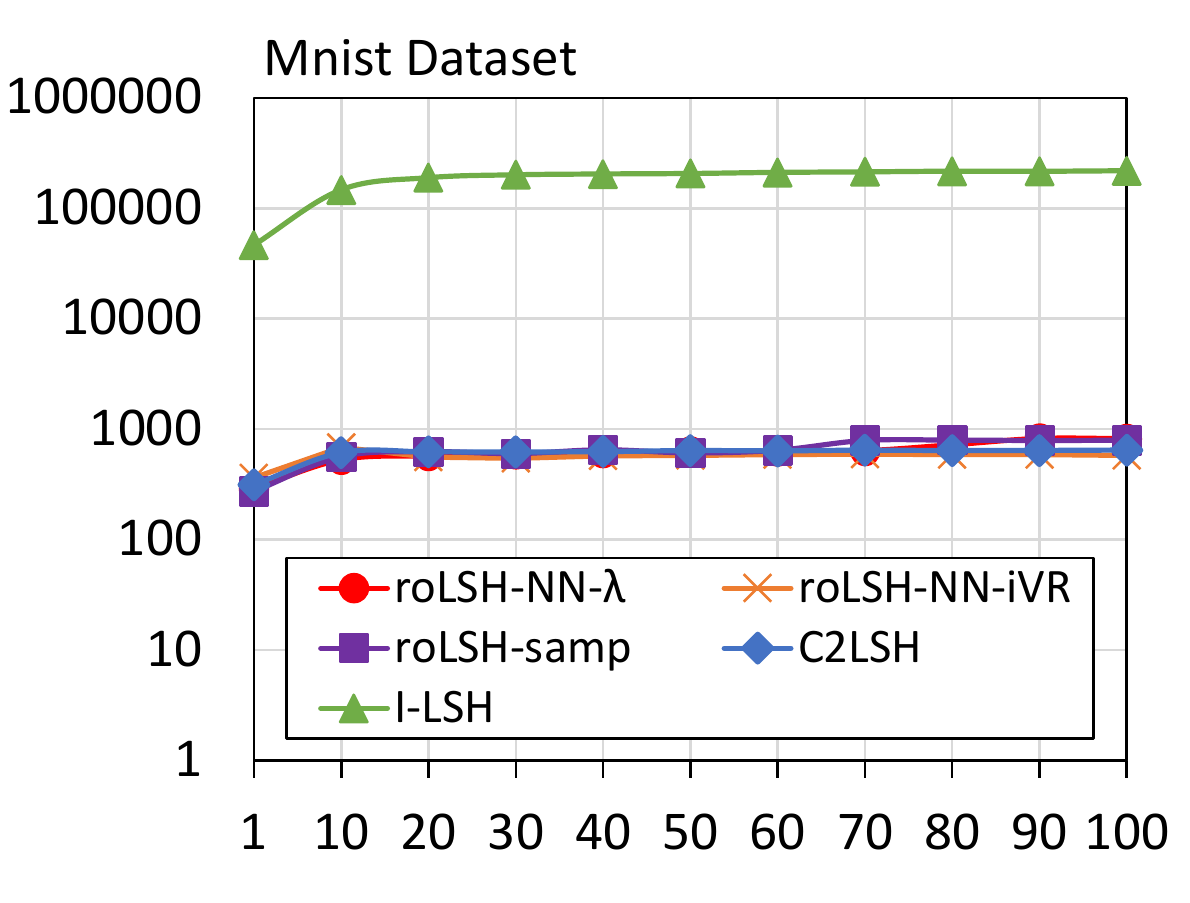}}}
	\end{subfigure}

	\caption{Algorithm Time (in ms, \textit{log scale}) (Y axis) for $k$ (X Axis) on 3 datasets}
	\label{fig:expAlgTime}
\end{figure*}

\begin{figure*}
	\centering
	\begin{subfigure}[b]{0.31\textwidth}
		\centering
		{\includegraphics[width=\linewidth]{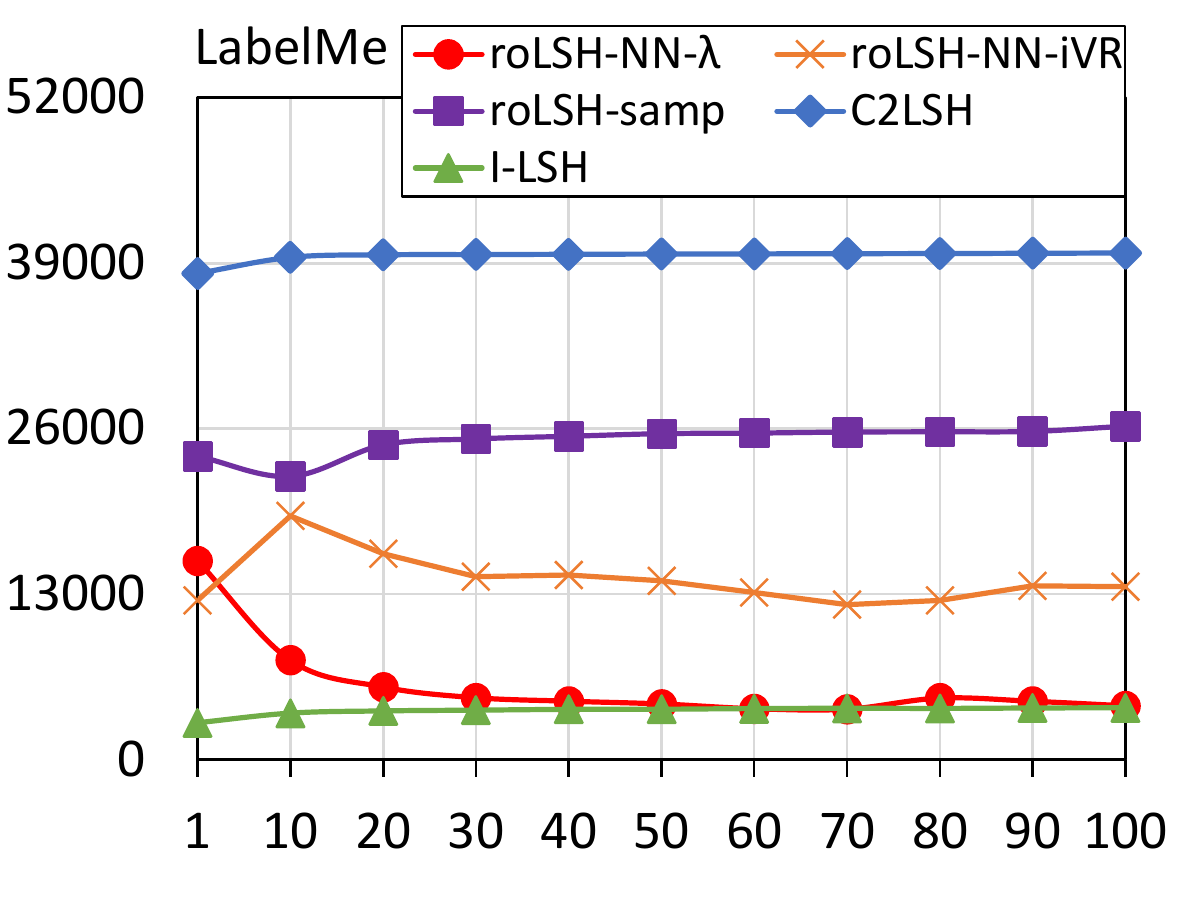}}
	\end{subfigure}\quad
	\begin{subfigure}[b]{0.31\textwidth}
		\centering
		{{\includegraphics[width=\linewidth]{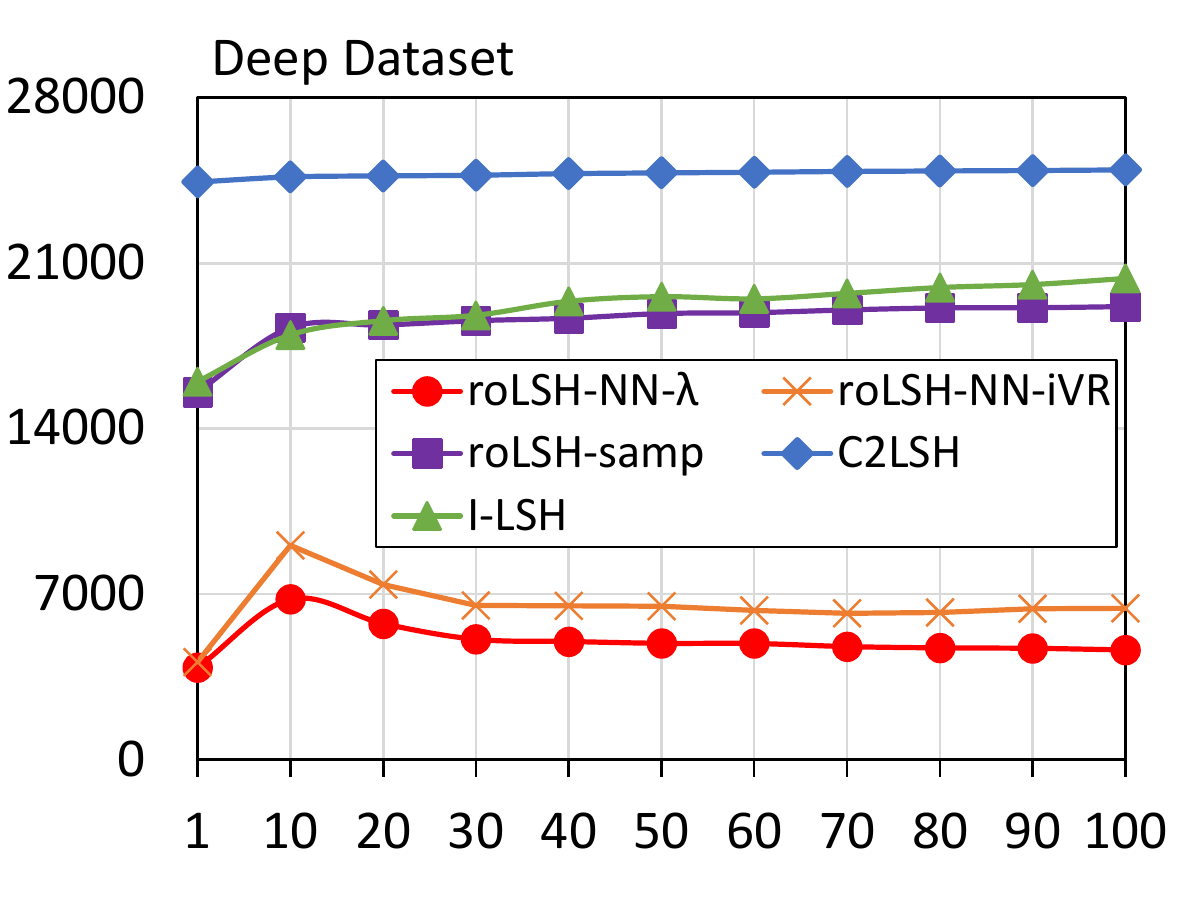}}}
	\end{subfigure}\quad
	\begin{subfigure}[b]{0.31\textwidth}
		\centering
		{{\includegraphics[width=\linewidth]{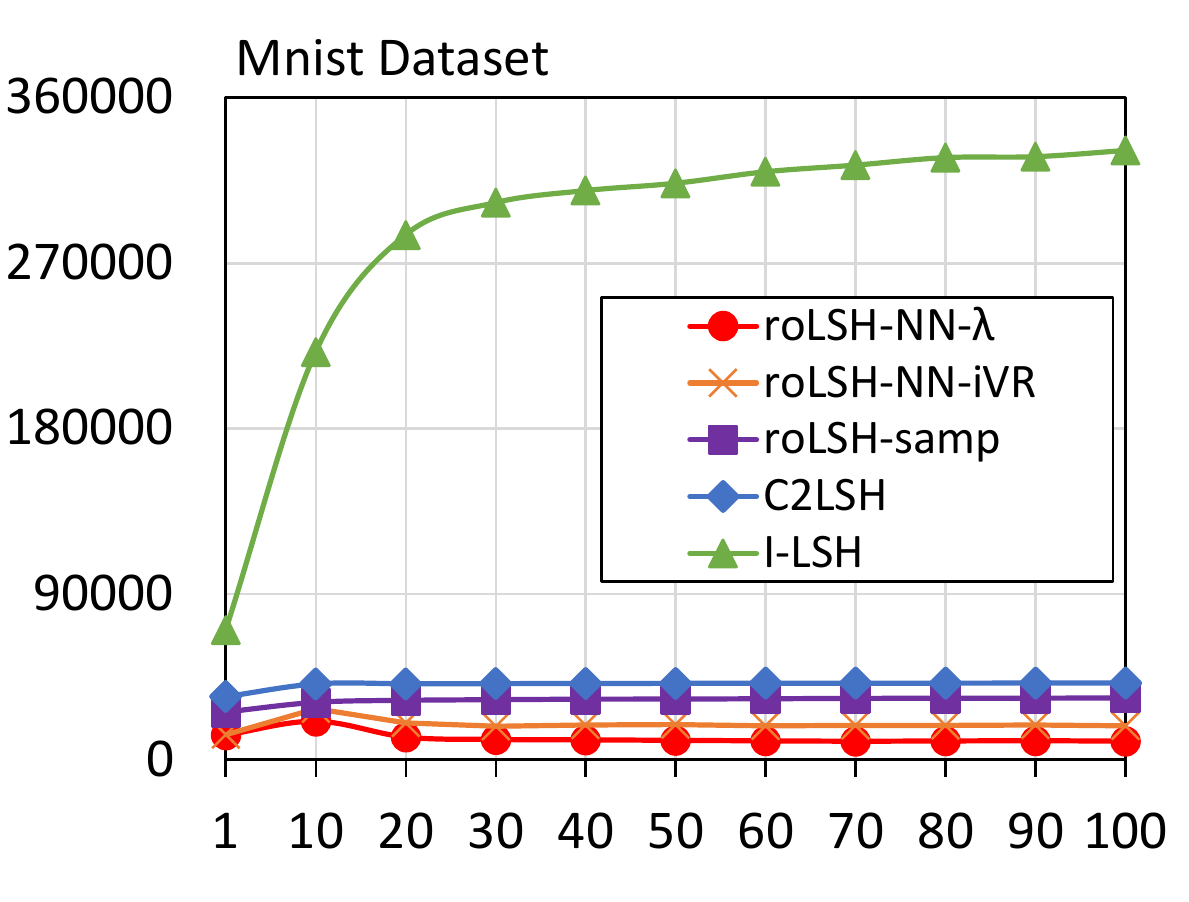}}}
	\end{subfigure}

	\caption{Query Processing Time (in ms) (Y axis) for $k$ (X Axis) on 3 datasets}
	\label{fig:expTime}
\end{figure*}

\begin{figure*}
	\centering
	\begin{subfigure}[b]{0.31\textwidth}
		\centering
		{\includegraphics[width=\linewidth]{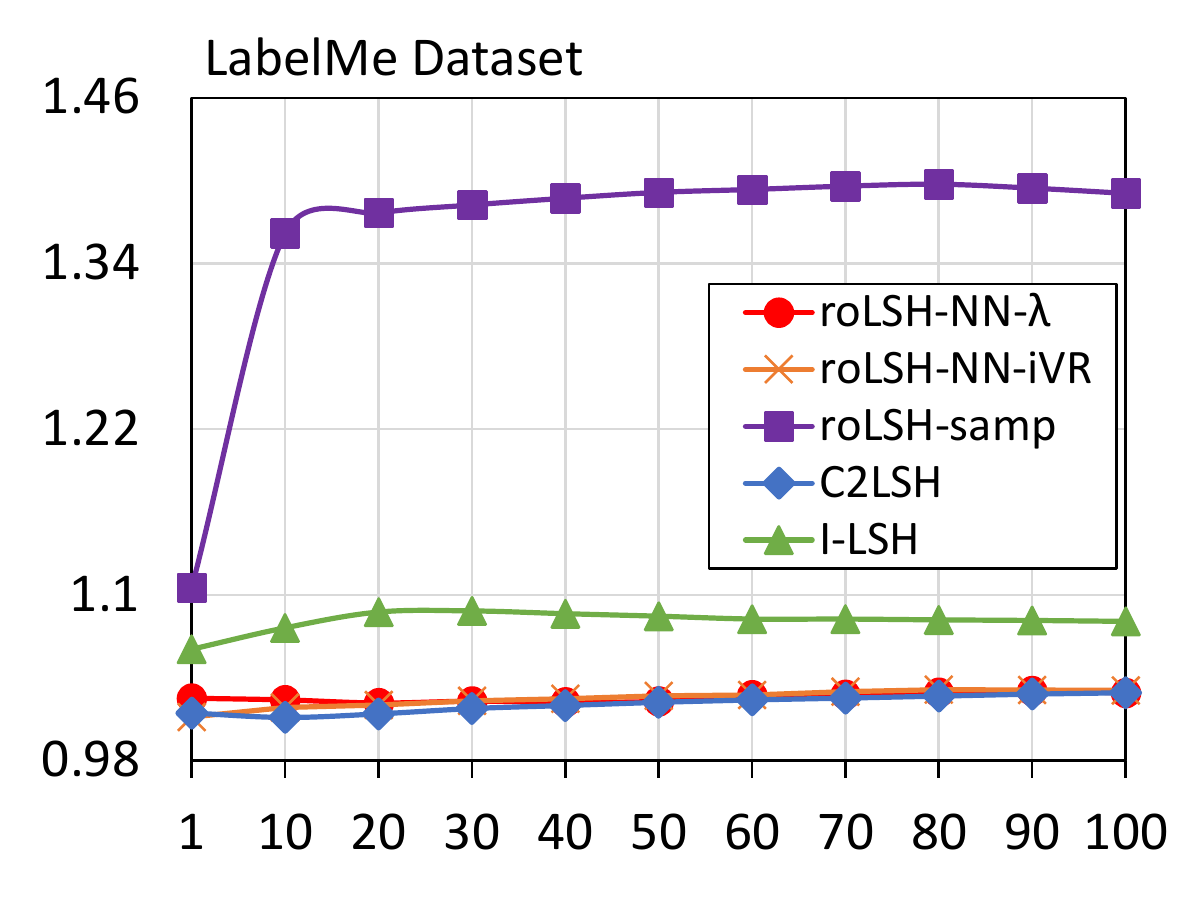}}
	\end{subfigure}\quad
	\begin{subfigure}[b]{0.31\textwidth}
		\centering
		{{\includegraphics[width=\linewidth]{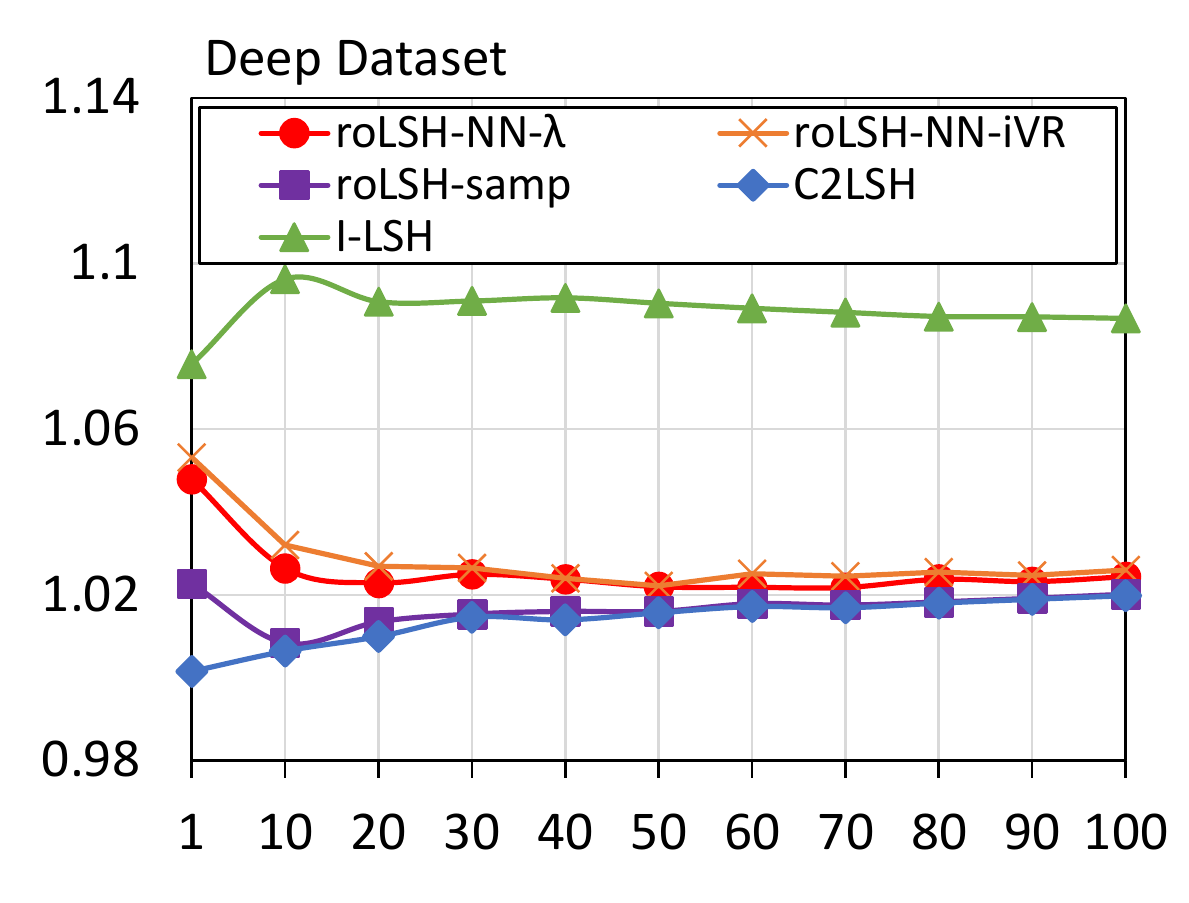}}}
	\end{subfigure}\quad
	\begin{subfigure}[b]{0.31\textwidth}
		\centering
		{{\includegraphics[width=\linewidth]{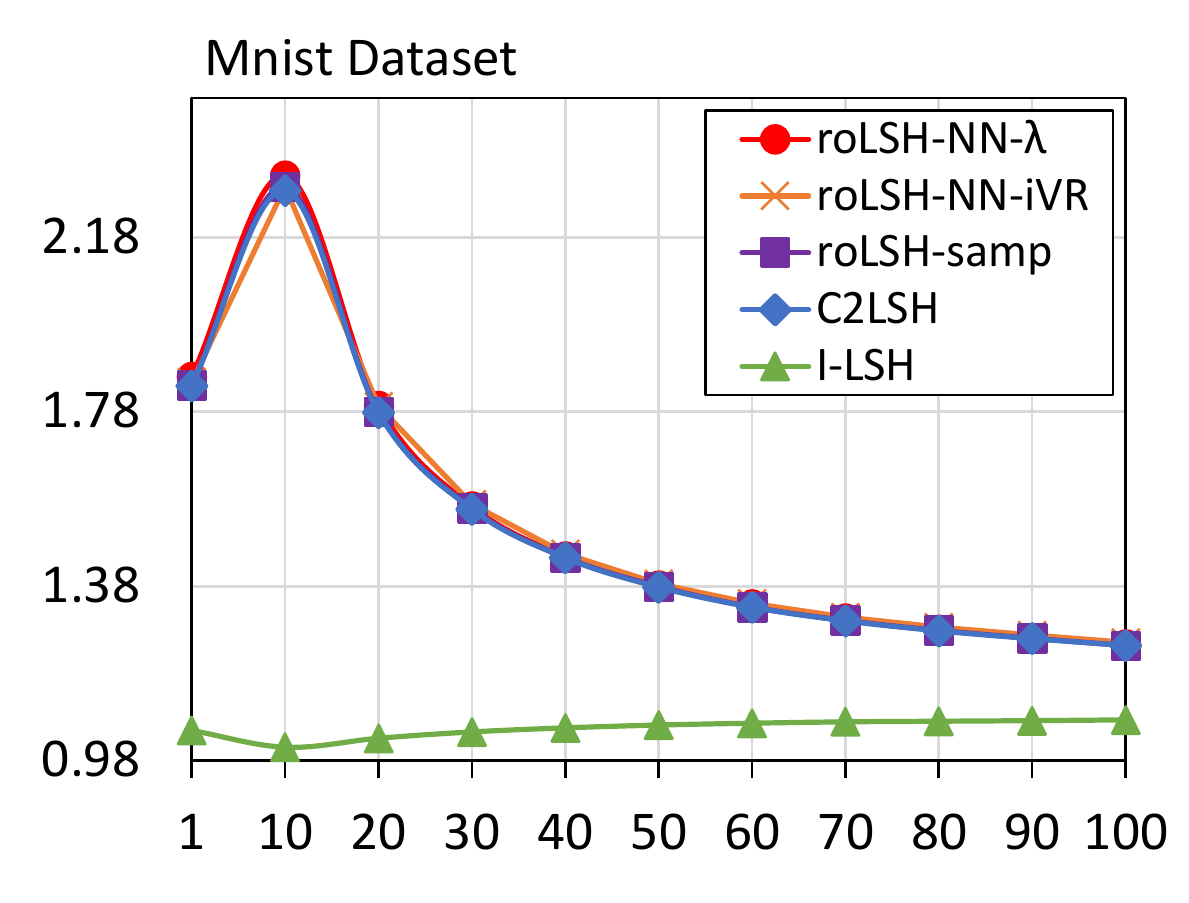}}}
	\end{subfigure}

	\caption{Accuracy Ratio (Y axis) for different $k$ (X Axis) on 3 datasets}
	\label{fig:expRatio}
\end{figure*}

\subsection{Datasets}
We use the following three popular real datasets to evaluate the proposed method. These datasets cover different sizes and are enough to show the scalability of the different strategies. The details of these datasets are as follows:

\begin{itemize}
	\item \textbf{LabelMe}\cite{russell2008labelme} consists of $181,093$ 512-dimensional points which were generated by running the GIST feature extraction algorithm on annotated images.

	\item \textbf{Deep} consists of $1,000,000$ 96-dimensional points that were randomly chosen from the Deep1B dataset introduced in \cite{babenko2016efficient}. 
	
	\item \textbf{Mnist}\cite{loosli2007training} This dataset contains $8,100,000$ 784-dimensional points that represent images of the digits 0 to 9 which are grayscale and of size 28 $\times$ 28.
\end{itemize}

\subsection{Evaluation Criteria and Parameters}

The goal of \textit{roLSH} is to improve the performance efficiency without sacrificing the accuracy of existing LSH techniques. 
The performance and accuracy of the technique used in this paper are evaluated using the following metrics: 

\begin{itemize}

	\item \textbf{Query Processing Time ($QPT$):} We break down the Query Processing Time into the Index I/O cost, the Algorithm time ($AlgTime$), and the negligible false positive removal cost (denoted by $FPRemTime$, which consists of the cost of reading the data point candidates and computing their exact Euclidean distance for removing false positives). Following \cite{Liu:2019}, we further break down Index I/O cost into the number of disk seeks (i.e. random I/O reads, $noDiskSeeks$) and the amount of data (i.e. index files, $dataRead$) read in MB. Following \cite{Seagate}, for a Seagate 1TB HDD with 7200 RPM, we assume a random seek to cost 8.5 ms on average, and the average time to read data to be 0.156 MB/ms. Thus, we have $QPT = noDiskSeeks*8.5 + dataRead*0.156 + AlgTime + FPRemTime$. 
	
	\item \textbf{Accuracy:} We follow the accuracy ratio definition followed by many previous works (\cite{Liu:2014:SEI:2732939.2732947,Gan:2012:LHS:2213836.2213898,Huang:2015:QLH:2850469.2850470}): $\frac{1}{k}\sum_{i=1}^{k}\frac{||o_i, q||}{||o_i^*,q||}$. Here, $o_i$ is the $i$th point returned by the technique and $o_i^*$ is the true $i$th nearest point from $q$ (ground truth). Ratio of 1 means the returned results have the same distance from the query as the ground truth. The closer the ratio is to 1, the higher is the accuracy. 
\end{itemize}

For the state-of-art methods, we used the same parameters suggested in their papers ($w=2.719$ for QALSH and $w=2.184$ for C2LSH). Also, as \textit{roLSH} is built on top of C2LSH, it uses the same parameters as C2LSH. We set the allowed error probability, $\delta$, to be 0.1.  
The Multilayer Perceptron (MLP) Neural Network is implemented using the Scikit-learn Python package \cite{scikit-learn}. In this paper, we use the default parameters and options (i.e. 100 hidden layers, \textit{ReLU} activation function, and the \textit{Adam} optimization algorithm). We leave the hyper-parameter tuning analysis to future work. We choose 10,000 training queries randomly from the dataset. 50 different queries were randomly chosen from the dataset for the evaluation. We report an average of the results on these 50 queries.

\subsection{Effect of Different Parameters on Performance of \textit{roLSH-NN-$\lambda$}}
\label{sec:exprolshNN}

In this section, we present the performance of \textit{roLSH-NN-$\lambda$} under different parameters for the Deep dataset.\\ 
\noindent\textbf{Effect of Training Size:} We consider three different training sizes (5K, 10K, 50K). In our experiments, the MSE reduces (by 18.4\% between 5K and 50K training size) as the training size increases and since the MSE decreases (i.e. the predicted radius is close to the actual radius), the overall Query Processing Time (QPT) also decreases (by 33\% between 5K and 50K training size). In the following experiments, we choose 10K as the default training size. Due to space limitations, we do not present the results in this Section \ref{sec:exprolshNN} in detail.

\noindent\textbf{Effect of Number of Different $k$ in Training:} We analyze the performance of \textit{roLSH-NN-$\lambda$} for different values of k that are present in the training data while keeping the total training size and $\lambda$ constant. We chose \{1, 50, 100\}, \{1, 25, 50, 75, 100\}, and \{1, 10, 25, 50, 75, 90, 100\} as three different settings. The MSE reduces as more diverse $k$ are included: by 33\% between the first two settings, but only by 14\% between the last two settings since the neural networks are capable of adequately predicting the radiuse for different $k$ even for the second setting (which is our default in the following experiments).

\noindent\textbf{Effect of Different Radius Increment ($\lambda$):}  
We experiment using $\lambda$ values of 5\%, 10\%, and 20\%. 
As $\lambda$ increases, the number of disk seeks decrease (by 32\%) since a higher $\lambda$ eventually results in a larger radius and in turn makes the algorithm stop sooner without processing all projections, but the algorithm time and the amount of I/O increases (by 4\% and 1\% respectively) since more hash buckets are processed. We choose 10\% as our default in further experiments.

\subsection{Discussion of the Results}

Table \ref{tab:IndexingComp} (a) shows the index sizes of all techniques on all datasets. Since we use C2LSH as our underlying LSH implementation, the sizes of \textit{roLSH} are similar to that of C2LSH. The reported size of \textit{roLSH-NN} includes the Neural Network model, and hence is very slightly higher than C2LSH. Table \ref{tab:IndexingComp} (b) shows the time taken to finish the index construction. The reported times show that the sampling and training overhead for \textit{roLSH-samp} and \textit{roLSH-NN} are only 3.4\% and 3.9\% for the largest dataset (Mnist). 

\noindent\textbf{Number of Disk Seeks: }Figure \ref{fig:expdiskseek} shows the number of disk seeks (random I/Os) required by these different techniques. It is very interesting to note that while I-LSH performs the best (\textit{roLSH-NN-$\lambda$} is a close second) for LabelMe, their performance degrades as the dataset size increases. I-LSH produces significant more disk seeks as the dataset size increases. We believe this is mainly due to the fact that more points need to be accessed incrementally to find the candidates. \textit{roLSH-NN-$\lambda$} significantly performs the best for Deep and Mnist datasets because it can accurately predict the radius for different $k$. Every time \textit{roLSH-NN-$\lambda$} underestimates the radius (Section \ref{sec:rolsh-NN}), it has to increment the radius by $\lambda$ resulting in a disk seek in each projection. Also, as expected, \textit{roLSH-NN-iVR} produces more disk seeks due to radius underestimation (Section \ref{sec:rolsh-NN}).

\noindent\textbf{Amount of Data Read: }Figure \ref{fig:expdiskIO} shows the total amount of data (index files) read. Since I-LSH incrementally increases the search to the nearest point in the projected space (instead of an empirically chosen number, such as $\lambda$), it results in the least amount of data read for all datasets. These savings in the I/O are offset due to the expensive search for the nearest point as shown in Figure \ref{fig:expAlgTime}. Especially for lower $k$, \textit{roLSH-NN-iVR} and \textit{roLSH-NN-$\lambda$} read less data than C2LSH, but as $k$ increases the overall data read is similar for both techniques. It is interesting to note that \textit{roLSH-samp} reads significantly more data for LabelMe dataset. This is due to choosing of a bad starting radius due to the unique distribution of the LabelMe radiuses (Figure \ref{fig:radHists} (a)). Moreover, \textit{roLSH-NN-iVR} and \textit{roLSH-NN-$\lambda$} read similar amount of data since their starting radius is the same.

\noindent\textbf{Algorithm Time: }Figure \ref{fig:expAlgTime} shows the time needed by the algorithms to find the candidates (excluding the time taken to read the index files). Note the log scale of this figure because the algorithm time for I-LSH was orders of magnitude more than the other techniques. This is because I-LSH expands the radius incrementally in each projection which creates a significant overhead. Figure \ref{fig:expAlgTime} shows that the overhead of our methods is negligible when compared with C2LSH.

\noindent\textbf{Query Processing Time: }Figure \ref{fig:expTime} shows the overall time required to solve a given k-NN query. I-LSH works well for smaller datasets (LabelMe) but is significantly slower as the dataset size increases (due to high overhead in incrementally finding the next neighbor in each projection). \textit{roLSH-samp} is always faster than C2LSH because of the savings in disk seeks. \textit{roLSH-NN-iVR} and \textit{roLSH-NN-$\lambda$} are always much faster than \textit{roLSH-samp} and C2LSH because of their ability to accurately predict radiuses, resulting in significantly less disk seeks and lesser (or similar in some cases) data read than C2LSH. \textit{roLSH-NN-$\lambda$} has better performance compared to \textit{roLSH-NN-iVR}, mainly because of having lesser disk seeks as discussed before. This figure shows the performance benefit of \textit{roLSH-NN-$\lambda$} over its competitors for different datasets, and confirms that the design of \textit{roLSH-NN-$\lambda$} leads to improvement in overall efficiency. 

\noindent\textbf{Accuracy: }Figure \ref{fig:expRatio} shows the accuracy of all techniques. \textit{roLSH-samp} gives the worst accuracy for LabelMe dataset. We found that this is due to the fact that LabelMe dataset has queries with very different large radiuses. \textit{roLSH-samp} is unable to work well for datasets that have differing radiuses because if the starting radius is chosen wrong, then \textit{roLSH-samp} can significantly overestimate the radius for larger radiuses leading to lower accuracy. \textit{roLSH-NN-$\lambda$} always returns similar accuracy to that of C2LSH. I-LSH returns a better accuracy for Mnist dataset due to their usage of \textit{query-aware} hash functions, but the performance is significantly slower as shown in Figure \ref{fig:expTime}.

\section{Conclusion}

Locality Sensitive Hashing is a popular technique for efficiently solving Approximate Nearest Neighbor queries in high-dimensional spaces. State-of-the-art LSH techniques improve the overall disk I/Os at the expense of algorithm time. 
In this paper, we present a unique index structure called radius-optimized Locality Sensitive Hashing (\textit{roLSH}). The goal of \textit{roLSH} is to improve the efficiency of LSH techniques by improving the random disk seeks without any significant overhead in algorithm time. We propose two novel strategies, \textit{roLSH-samp} and \textit{roLSH-NN} that are based on sampling and Neural Networks respectively. Experimental results on real datasets show the benefit of \textit{roLSH} in improving overall performance over existing state-of-the-art techniques, C2LSH and I-LSH. 
 
\begingroup

\endgroup
 
\end{document}